\newtheorem{theorem}{Theorem}
\newtheorem{lemma}[theorem]{Lemma}
\newcommand{\KLD}[2]{D_{\text{KL}}(#1 \parallel #2)}
\newcommand{\loss}[2]{L(#1 \parallel #2)}
\newcommand{\drv}[1]{\partial_{#1}}
\newcommand{\rest}[2]{%
  #1 \mathop{\mkern-3mu
  \downharpoonright
  \mkern-2mu\mathchoice{\mkern-3.5mu}{\mkern-3.5mu}{}{}_{#2}}
  \mathchoice{\mkern-1mu}{\mkern-1mu}{\mkern-.5mu}{}
}
\newcommand{\sscl}[2]{{#1}_{/#2}}
\newcommand{\subsp}[2]{\{#1, #2\}}
\newcommand{\pach}{\rightarrow}
\newcommand{\ancdec}{\rightarrow_*}
\newcommand{\rootparent}[1]{\pi_{#1}}
\newcommand{\derv}[3]{\mathcal{D}_{#1}(#2; #3)}
\newcommand{\children}[1]{k_{#1}}
\newcommand{\nss}{n_{s}}
\newcommand{\npcsp}{n_{p}}
\newcommand{\npcsprest}[1]{\rest{\npcsp}{#1}}
\newcommand{\subspcross}{\boxtimes}
\newcommand{\parentof}[2][\tau]{\pi_{#1}{(#2)}}
\newcommand{\subspsupport}[1]{\mathscr{C}_{#1}}
\newcommand{\pcspsupport}[1]{\mathscr{P}_{#1}}
\title{Variational Bayesian Supertrees}
\author{Michael Karcher, Cheng Zhang, and Frederick A Matsen IV}
\begin{document}

\maketitle

\begin{abstract}
Given overlapping subsets of a set of taxa (e.g.\ species),
and posterior distributions on phylogenetic tree topologies for each of these taxon sets,
how can we infer a posterior distribution on phylogenetic tree topologies for the entire taxon set?
Although the equivalent problem for in the non-Bayesian case has attracted substantial research,
the Bayesian case has not attracted the attention it deserves.
In this paper we develop a variational Bayes approach to this problem and demonstrate its effectiveness.
\end{abstract}

\section*{Introduction}

Fields such as phylogenetics often work with
a sort of abstracted family tree,
called a \emph{phylogenetic tree}, frequently abbreviated here as \emph{tree}.
These trees have different members of a population as their tips,
and their branching points describe the relations between the tips
and how recently they had a common ancestor.
If some of the tips are censored,
the tree topology simplifies in a process we refer to as \emph{restriction}.
If one has multiple trees restricted from the same original, uncensored tree,
one may wish to reconstruct the original \emph{supertree}.
Suppose instead one has multiple probability distributions of restricted trees,
then one may be interested in reconstructing the supertree probability distribution.
This is a difficult problem both theoretically and computationally
without additional structure.
We take a variational approach by training a flexible model to approximate
the true supertree distribution as closely as possible,
while still maintaining computational tractability.


This problem falls in the domain of \emph{supertree analysis},
a topic that has gone by this name since 1986 but has much earlier roots
as reviewed by \citet{Sanderson1998-rf} and \citet{Bininda-Emonds2004-review}.
Broadly speaking, there are two goals of supertree analysis.
The first goal is to reduce computational complexity
by dividing the ensemble of taxa into subsets,
performing independent analysis on those subsets,
and then combining these analyses into a single tree \citep{Huson1999-sh},
or in the Bayesian case a single posterior distribution.
The second goal is to combine information from multiple sources,
such as different genes,
which may have divergent phylogenetic signal and patterns of presence and absence.
Although there is some overlap between these goals,
the focus of this paper is on the first goal.
Algorithms for the second goal are better served by methods
that explicitly model the origins of different phylogenetic signal,
such as via the multispecies coalescent \citep{Liu2007-ic,Heled2010-zi}.

The eventual goal of our work is to provide a divide-and-conquer strategy
for Bayesian phylogenetics,
in which taxa are divided into subsets,
a Bayesian analysis is run on each,
and then knitted back together using a supertree approach.
Although this approach in the non-Bayesian case has been a consistent theme
in phylogenetics since the work of \citet{Huson1999-sh},
the equivalent idea in Bayesian phylogenetics is comparatively underdeveloped.
This seems surprising given that Bayesian analyses are
much more computationally demanding than their non-Bayesian counterparts,
such that the lack of rapid Bayesian inference techniques is limiting their application
in important realms such as genomic epidemiology.
In any case the lack of modern supertree methods for Bayesian analysis
is currently preventing progress on such an approach.

The most relevant existing work, by \citet{Ronquist2004-rr},
summarizes phylogenetic posterior distributions in terms of one of two schemes.
In the Weighted Independent Binary (WIB) scheme,
a tree's probability is proportional to a product of terms,
each term being present in the product of the corresponding bipartition
present in the tree.
This scheme is in a sense a simpler version of the strategy presented here.

The Weighted Additive Binary (WAB) scheme is an extension
of the long-standing tradition in supertree analysis
of performing parsimony analysis on a data matrix
formed from encoding the splits of the tree as binary characters.
The weighting in WAB comes from assigning weights to the characters
in such an encoding according to their confidences.
One can then translate the corresponding parsimony objective into a Bayesian setting
by assigning a log-likelihood penalty to each unit of parsimony cost.
In total, by taking posterior distributions for trees on each of the subsets,
summarizing them in terms of one of these schemes,
and then using products of these factors as an approximation for posterior probability.
\citet{Ronquist2004-rr} show some correlation of this method
with actual tree posterior probabilities for example data sets on six and ten taxa,
and that the WAB scheme outperforms the WIB scheme.

In this paper we develop a variational Bayesian formulation of supertree inference.
Given a set of reference distributions 
of tree topologies with overlapping tips,
we find a \emph{supertree distribution} on the entire tip set
that closely approximates each reference distribution
when only considering the tips in that reference.
We will model these reference distributions and our supertree distribution
using subsplit Bayesian networks (\emph{SBNs}) \citep{sbn} reviewed below,
which generalize previous formalisms for describing probability distributions
on topologies~\citep{Hhna2012-pm,Larget2013-uo}.
We note in passing that these formalisms, in turn, noted connections between
their methods and the supertree work of \citet{Ronquist2004-rr}.
We focus on the case where the reference distributions are
originally given as, or subsequently approximated by, SBNs,
but the method is generalizable to arbitrary reference distributions
at the cost of computational efficiency.
We accomplish our goal of training a supertree distribution
using gradient descent to minimize the differences between
our reference distributions and our supertree distribution
(appropriately restricted).
Moreover, we show that the method successfully trains
a supertree distribution that is close to the truth
on both simulated and real-world phylogenetic sequence data.

\section*{Methods}

\subsection*{Overview}

Suppose we are given a set of probability distributions $\{p_i\}$
on rooted, bifurcating phylogenetic tree topologies,
abbreviated as \emph{tree topologies} or simply \emph{topologies},
each with a corresponding tip set $X_i$.
We refer to these \emph{tree distributions} on their respective tip sets
as our \emph{reference distributions}.
Our method for reconstructing a supertree distribution
is to find a probability distribution $q(\tau)$ of topologies
on the entire taxon set $X = \cup_i X_i$ so that $q$ is as close as possible to each of
the reference distributions when the tips in not present in that reference distribution are removed.
Figure~\ref{fig:supertree_merge} illustrates the flow of information,
taking two reference distributions and producing a supertree distribution
on the union of its references' tip sets.

\begin{figure}
  \centering
  \includegraphics[width=0.7\textwidth]{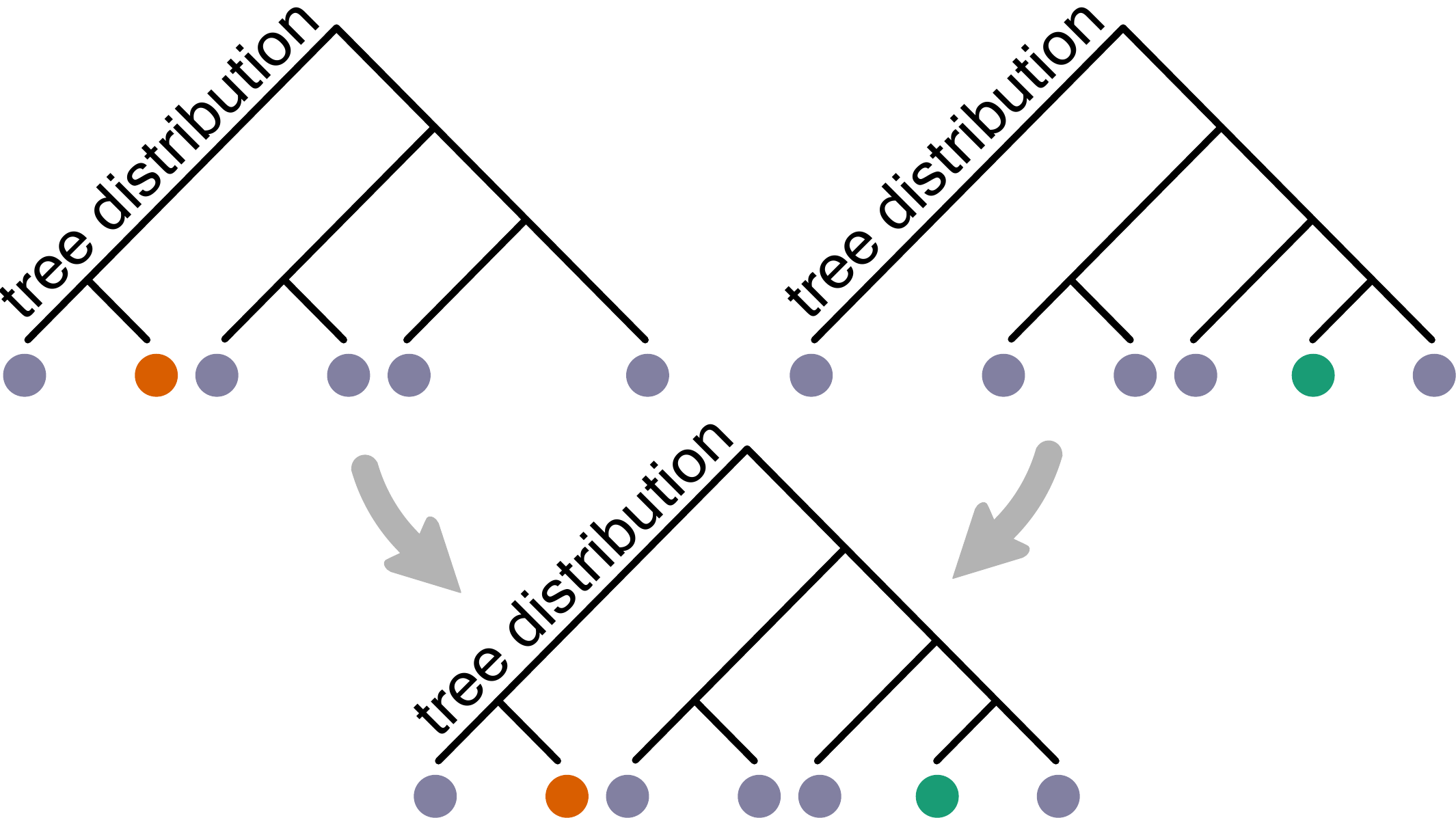}
  \caption{Illustrating the supertree method on distributions.
  The top row of two tree distributions are reference distributions,
  and the bottom tree distribution is the supertree distribution.
  The gray circles indicate tips shared by the two reference distributions,
  whereas the orange and green circles represent tips
  that are only present in one of the two references.}
  \label{fig:supertree_merge}
\end{figure}

\begin{figure}
  \centering
  \includegraphics[width=0.6\textwidth]{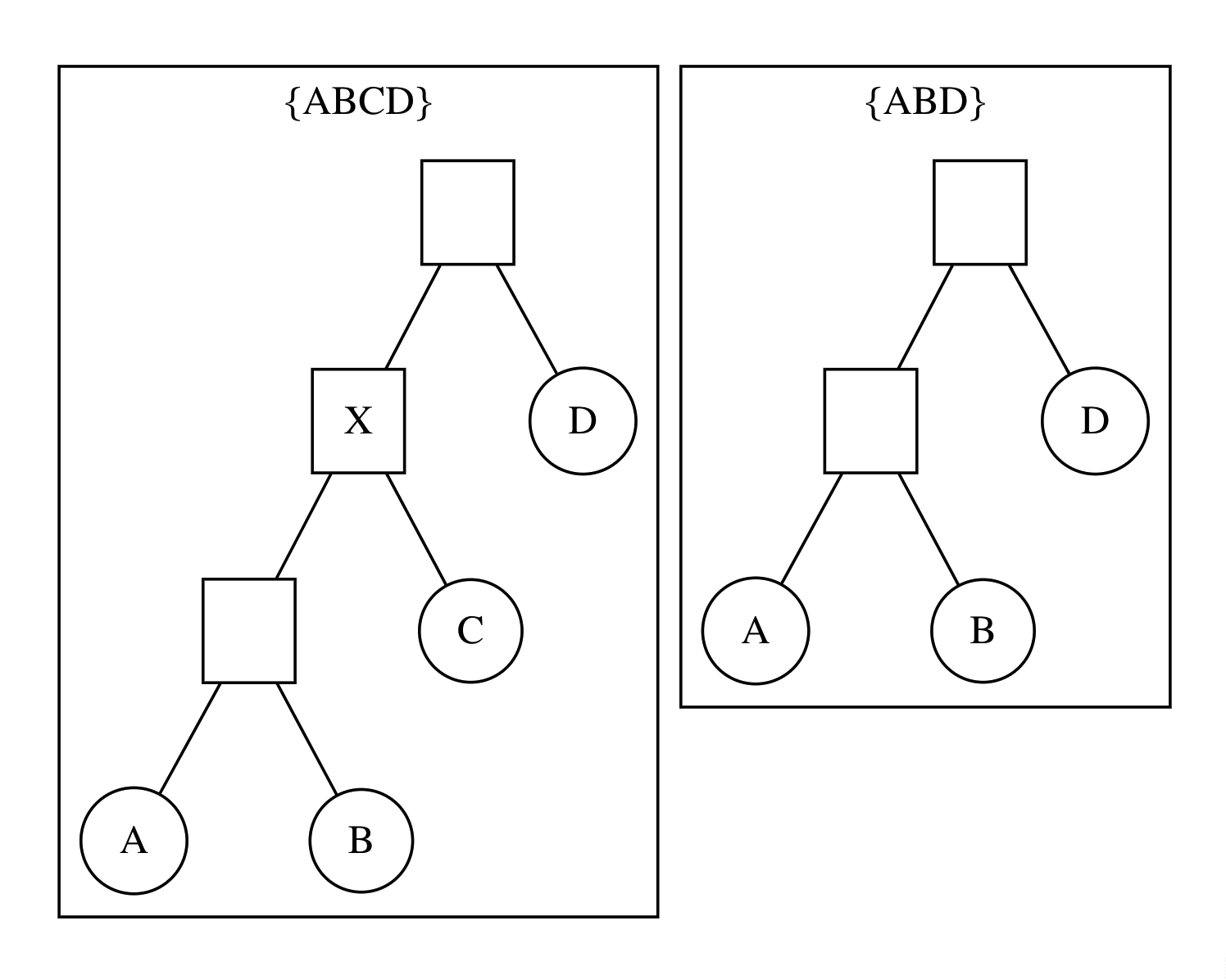}
  \caption{Illustration of restricting tree topologies.
  We restrict a tree with tip set $\{ABCD\}$ (left)
  to tip set $\{ABD\}$ (right).
  We remove tip $C$ and the internal node marked with an $X$
  as it no longer has two descendants.
  }%
  \label{fig:tree_restriction_example}
\end{figure}

We now establish the formalisms necessary to achieve this goal.
Given a taxon subset $\bar X \subset X$ and a topology $\tau$ on $X$,
define the \emph{restriction} $\rest{\tau}{\bar{X}}$ to be the topology induced by
removing all taxa that are not in $\bar X$ from $\tau$
and removing all internal nodes with fewer than two children \citep{Semple2003-em}.
We illustrate an example of this process in
Figure~\ref{fig:tree_restriction_example}.
Given a probability distribution $q$ on tree topologies with taxon set $X$
and a topology $\bar{\tau}$ on $\bar X$,
we define the restriction of $q$ to $\bar X$ as a marginalization over
the topologies that restrict to $\bar{\tau}$,
\begin{equation}
\rest{q}{\bar{X}}(\bar \tau) \coloneqq \sum_{\tau: \rest{\tau}{\bar{X}} = \bar{\tau}} q(\tau).
\label{eq:marginalSum}
\end{equation}

In this paper, our goal is to infer a distribution $q(\tau)$ for topologies on the entire taxon set $X$
such that its restrictions $\{\rest{q}{X_i}\}$ are close to
a set of distributions $\{p_i\}$ for topologies on corresponding taxon subsets $\{X_i\}$.
Specifically, our main objective will be to minimize
our \emph{loss function}: the sum of KL-divergences between
each given distribution and $q$ restricted to its taxon subset,
\[
\loss{\{p_i\}}{q} = \sum_i \KLD{p_i}{\rest{q}{X_i}}.
\]
If we have reason to prioritize some reference distributions differently than others,
due to differing confidence in the different distributions among other reasons,
we can easily incorporate weights into a weighted loss function,
\[
\loss{\{p_i, w_i\}}{q} = \sum_i w_i \KLD{p_i}{\rest{q}{X_i}}.
\]
Hereafter, though, we focus on the unweighted version.

For parameterizations of $q$ such that
$\KLD{p_i}{\rest{q}{X_i}}$ has an efficiently computable gradient
with respect to $q$'s parameters,
gradient descent is available for minimizing the loss function.
We describe one such family of parameterizations using SBNs below
and derive efficient KL-divergences and gradients later in this section.

\subsection*{Review of subsplit Bayesian networks}

Here we review subsplit Bayesian networks (SBNs) in the case of rooted topologies.
Our approach will have a different emphasis than the original \citet{sbn} work---where
the previous work described SBNs as very general class of Bayesian networks
and concentrated on unrooted trees,
we will focus on a simpler SBN structure parameterizing
rooted (phylogenetic) trees.

We will use the term \emph{clade} to refer to a subset of a taxon set $X$.
A \emph{subsplit} is a set containing two disjoint \emph{child clades} $s = \subsp{Y}{Z}$.
We define the \emph{parent clade} of a subsplit as the union of its two child clades,
with notation $U(s) = Y \cup Z$.
If we need to specify a particular child clade $Y$ of a subsplit $s$
as being the focus of attention (as opposed to the other child clade),
we use the notation $\sscl{s}{Y}$.
Note that the parent clade of $s$ is allowed to be a subset of $X$,
in contrast to the traditional definition of a \emph{split} as a bipartition of the entire taxon set $X$~\citep{Semple2003-em}.
We will say that the subsplit $s$ \emph{divides} a clade $W$ if $U(s) = W$,
with notation $W \pach s$.
We also say that $s$ is a \emph{child subsplit} of \emph{parent subsplit} $t$
if $U(s)$ is a child clade of $t$.
We refer to $t$ and $s$ as a \emph{parent-child subsplit pair} or \emph{PCSP}
with notation $t \pach s$ when $s$ is a valid child of a subsplit $t$.

We also extend the concept of valid child subsplits to further descendants.
Given subsplit $a = \subsp{Y}{Z}$, we say that subsplit $d$ is a valid \emph{descendant} of $\sscl{a}{Y}$ if $U(d) \subseteq Y$,
and we use the notation $\sscl{a}{Y} \ancdec d$.
Additionally, we say that $d$ is a valid descendant of $a$
with notation $a \ancdec d$ if
\begin{align*}
  &\sscl{a}{Y} \ancdec d,\\
  &\sscl{a}{Z} \ancdec d, \text{or}\\
  &a = d.
\end{align*}
Equivalently, we say $a$ is a valid \emph{ancestor} of $d$ under the same conditions
and with the same notation.
Note that $t \pach s$ implies $t \ancdec s$ but not vice versa.

We use the term \emph{path} to refer to a sequence of subsplits
such that each element is a descendant of the previous.
For example, the path $a \ancdec t \pach s$ would refer to
a sequence starting with $a$,
proceeding via any number of subsplits to $t$ (including zero if $a=t$),
then directly to $t$'s child $s$.

It will be convenient to also introduce \emph{singletons} and \emph{trivial subsplits}.
A singleton corresponds to one of the tips of the tree
and is represented by a clade with size one
or a subsplit containing a singleton clade and the empty set.
A subsplit is trivial if one of its child clades is empty.
We typically exclude singletons and trivial subsplits from sets of subsplits,
unless explicitly included.

Each bifurcating rooted topology can be uniquely represented
as a set of the subsplits it contains.
For example, the topology given by the Newick string \citep{newickformat}
 \texttt{"((t1,t2),t3);"} is described by
the subsplits $\subsp{\{\mathtt{t1},\mathtt{t2}\}}{\{\mathtt{t3}\}}$
and $\subsp{\{\mathtt{t1}\}}{\{\mathtt{t2}\}}$.
We will use the notation $s \in \tau$ to mean that the subsplit $s$ is found in $\tau$
and the notation $\tau \subseteq S$ to mean that all of the subsplits in $\tau$ are in set $S$.
The same holds true for specifying a topology in terms of PCSPs,
and we will use the same notation in that case.
Each subsplit $s \in \tau$ has two child clades
which each must correspond to a singleton or a subsplit that divides it.
Similarly, for a given topology $\tau$ each tip or subsplit $s$ has a
\emph{parent subsplit} $t$ such that $U(s)$ is one of the child clades of $t$.
We will denote the parent subsplit of $s$ with $\parentof{s}$.
In the above example,
$\subsp{\{\mathtt{t1},\mathtt{t2}\}}{\{\mathtt{t3}\}}$
is the parent of $\subsp{\{\mathtt{t1}\}}{\{\mathtt{t2}\}}$,
which in turn is the parent of $\{\mathtt{t1}\}$ and $\{\mathtt{t2}\}$.
In order to eliminate having to make a special case for the root subsplit $r$,
we define its parent subsplit to be a special trivial subsplit of the entire taxon set,
i.e. $\parentof{r} = \subsp{X}{\emptyset} = \rootparent{X}$.

In order to illustrate how to construct an SBN,
we first describe how to sample a topology from an SBN\@.
Starting from the root clade,
recursively construct a topology: for
any currently-childless clade $W$ larger than a singleton,
sample a subsplit that divides $W$ from
a probability distribution, supplied by the SBN,
conditional on some subset of the ancestors of $W$.
These conditional distributions can be parameterized in different ways
using different subsets of the clades' ancestry \citep{sbn},
with each parameterization defining a family of SBN probability distributions
on tree topologies.
In this paper, we focus on two families in particular:
\emph{clade-conditional distributions} (CCDs) where the subsplit distributions
$p(s|U(s))$ are conditional on the subsplits' parent clade~\citep{Hhna2012-pm,Larget2013-uo},
and \emph{subsplit-conditional distributions} (SCDs) where the subsplit distributions
$p(s|\sscl{t}{U(s)})$ are conditional on the subsplits' parent subsplit and clade~\citep{sbn}.
We fix the conditional probability of any singleton to be 1.
With our induced conditional independence assumptions,
the SBN probability for a rooted tree $\tau$ can then be easily computed:
under CCDs $p(\tau) = \prod_{s \in \tau} p(s | U(s))$,
and under SCDs $p(\tau) = \prod_{s \in \tau} p(s | \sscl{\parentof{s}}{U(s)})$.

We use the notation $p(s) \coloneqq p(s \in \tau)$
for the \emph{unconditional} probability
of a subsplit $s$ being present in a topology $\tau$ randomly sampled according to $p$.
Similarly, we use $p(t \pach s)$ for the unconditional probability of PCSP $t \pach s$.
For CCD-parameterized SBN $p$,
we define the \emph{subsplit support} $\subspsupport{}$ as the set of subsplits that
have positive probability under $p$.
For SCD-parameterized SBN $p$, we define the \emph{PCSP support} $\pcspsupport{}$ as a heterogeneous set containing
the PCSPs that have positive probability under $p$,
the subsplits that have positive probability under $p$,
the singletons for $p$'s tip set,
and the empty subsplit.

\subsection*{KL-divergence between SBNs}

Here we show that the KL-divergence
between two SBN-parameterized distributions can be computed efficiently.
If both $p(\tau)$ and $q(\tau)$ are CCD-parameterized SBNs,
\begin{align*}
  \KLD{p}{q} &= -\sum_{\tau} p(\tau) \log \left( \frac{q(\tau)}{p(\tau)} \right) \\
                 &= -\sum_{\tau} p(\tau) \sum_{s} 1_{s \in \tau} \log \left( \frac{q(s|U(s))}{p(s|U(s))} \right) \\
                 &= -\sum_{s} \log \left( \frac{q(s|U(s))}{p(s|U(s))} \right) \sum_{\tau} p(\tau) 1_{s \in \tau} \\
                 &= -\sum_{s} p(s) \, \left[ \log \left( q(s|U(s)) \right) - \log \left( p(s|U(s)) \right) \right]. \stepcounter{equation}\tag{\theequation}\label{eq:KL_CCD}
\end{align*}
Computing this sum is linear time in the number of subsplits in the subsplit support of $p$.

Similarly, if both $p(\tau)$ and $q(\tau)$ are SCD-parameterized SBNs,
\begin{align*}
  \KLD{p}{q} &= -\sum_{\tau} p(\tau) \log \left( \frac{q(\tau)}{p(\tau)} \right) \\
    &= -\sum_{\tau} p(\tau)
    \sum_{(t \pach s)} 1_{(t \pach s) \in \tau}
    \log \left( \frac{q(s|\sscl{t}{U(s)})}{p(s|\sscl{t}{U(s)})} \right) \\
    &= -\sum_{(t \pach s)} \log \left( \frac{q(s|\sscl{t}{U(s)})}{p(s|\sscl{t}{U(s)})} \right)
    \sum_{\tau} p(\tau) 1_{(t \pach s) \in \tau} \\
    &= -\sum_{(t \pach s)} p(t \pach s) \:
    \left[ \log \left( q(s|\sscl{t}{U(s)}) \right)
    - \log \left( p(s|\sscl{t}{U(s)}) \right) \right]. \stepcounter{equation}\tag{\theequation}\label{eq:KL_SCD}
\end{align*}
Computing this sum is linear time in the number of PCSPs
in the PCSP support of $p$.

\subsection*{Restricting SBNs}

Equation~\ref{eq:marginalSum} defines how to take
a distribution $q$ on trees with taxon set $X$
and restrict it to its induced distribution on
trees with taxon set $\bar{X} \subset X$.
If $q$ is an SBN-parameterized distribution,
we can more efficiently calculate $\rest{q}{\bar{X}}$ from the SBN parameters directly:
we can restrict a subsplit $s$ to taxon set $\bar{X}$
by taking the intersection of both child clades with $\bar{X}$.
One consequence of this is that some subsplits on $X$
will become trivial subsplits on $\bar{X}$.
On the other hand, if a restricted subsplit is nontrivial,
then we know the original subsplit is nontrivial,
because the restricted subsplit separates at least one pair of tips,
so the original subsplit will separate those tips at well.
Furthermore, subsplits represent recursive bipartitions of sets,
so any pair of tips can only be partitioned by a subsplit once.
Therefore,
all subsplits that restrict to the same nontrivial subsplit $\bar{s}$ are mutually exclusive,
since any subsplit that restricts to $\bar{s}$
separates all the same tips that $\bar{s}$ partitions.
By this mutual exclusivity, the probability of a restricted subsplit $\bar{s}$
under a restricted distribution $\rest{q}{\bar{X}}$ is simply
\begin{equation} \label{eq:CCDUnconditional}
  \rest{q}{\bar{X}}(\bar{s}) = \sum_{s: \, \rest{s}{\bar{X}} = \bar{s}} q(s).
\end{equation}
Similarly, subsplits with the same clade are mutually exclusive,
so the unconditional probability of a clade appearing is
\begin{equation} \label{eq:CCDCladeUnconditional}
  \rest{q}{\bar{X}}(\bar{U}) = \sum_{\bar{s}': \, U(\bar{s}')=\bar{U}} \rest{q}{\bar{X}}(\bar{s}').
\end{equation}

In order to construct the restricted SBN,
we need to compute the appropriate conditional probabilities,
which we can easily calculate from unconditional probabilities.
In a CCD context, subsplit $s$ probabilities are conditional on observing its clade $U(s)$.
We can build upon Equation~\ref{eq:marginalSum} to find the restricted SBN induced by restricting to $\bar{X} \subset X$.
We see
\begin{equation} \label{eq:CCDConditional}
  \rest{q}{\bar{X}}(\bar{s} \mid  U(\bar{s})) = \frac{\rest{q}{\bar{X}}(\bar{s}, U(\bar{s}))}{\rest{q}{\bar{X}}(U(\bar{s}))} = \frac{\rest{q}{\bar{X}}(\bar{s})}{\rest{q}{\bar{X}}(U(\bar{s}))}.
\end{equation}

\begin{figure}
  \centering
  \includegraphics[width=0.7\textwidth]{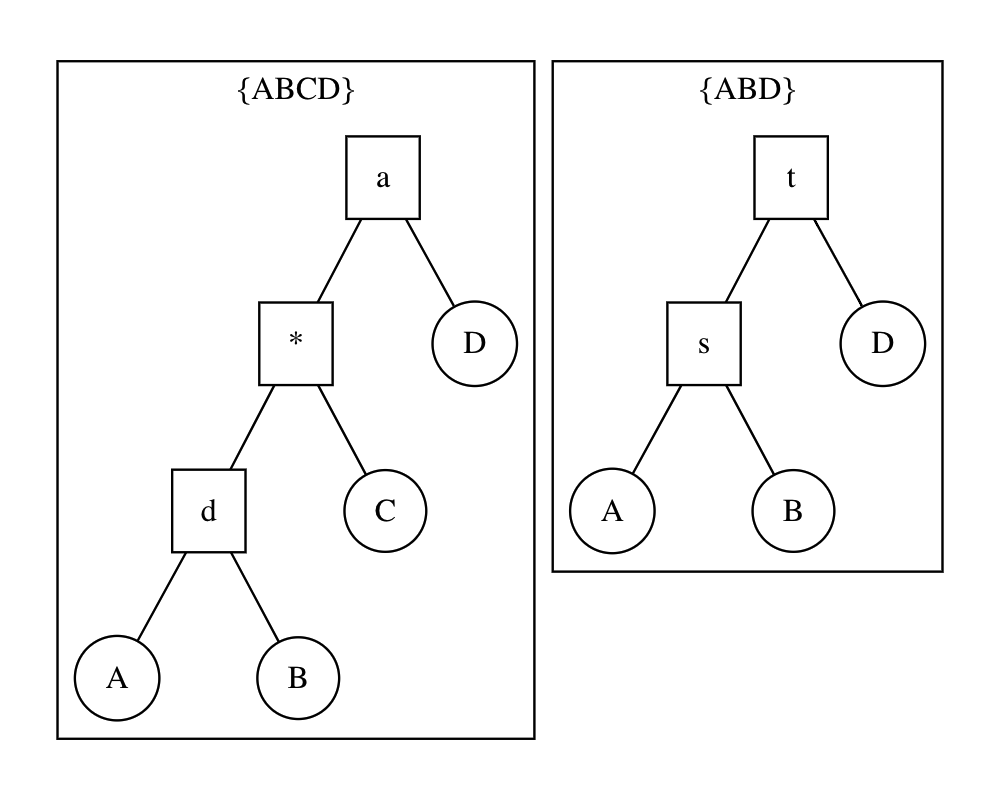}
  \caption{Illustration of restricting PCSPs.
  We restrict a tree with tip set $\{ABCD\}$ (left)
  to tip set $\{ABD\}$ (right).
  Note how the node with label $a$ on the left
  corresponds to the node with label $t$ on the right,
  just as the node with label $d$ corresponds to the node with label $s$.
  Note that nodes $a$ and $d$ are not parent and child,
  but nodes $t$ and $s$ are.
  This is possible because the node with label $*$ is trivial under restriction,
  and therefore has no corresponding node on the right.
  }%
  \label{fig:restriction_example}
\end{figure}

A slightly more involved construction is needed to gain an equivalent formula for the SCD case.
For SCD parameterizations, we need the unconditional probability of a PCSP\@.
PCSPs are mutually exclusive, so an argument similar to the above holds,
but subsplits $a$ and $d$ that respectively restrict to the restricted parent $\bar{t}$ and child $\bar{s}$
do not themselves have to be a valid parent-child pair before restriction.
We illustrate this possibility in Figure~\ref{fig:restriction_example}.
More formally, the following two statements are equivalent:
(1) the ancestor-descendant pair $a \ancdec d$
restricts to the PCSP $\bar{t} \pach \bar{s}$,
and (2) a sequence of parent-child pairs in $q$ exists,
starting with $a$, ending with $d$, with subsplits $\{t_i\}$ in between,
such that each $\rest{t_i}{\bar{X}}$ is trivial.
The converse is elementary,
but to show that (1) implies (2),
we know $a$ and $d$ exist by assumption,
and $a$ restricts to $\bar{t}$ which is the parent of $\bar{s}$.
Then one of $a$'s child clades restricts to $U(\bar{s})$,
and every subsplit between $a$ and $d$ must divide $U(\bar{s})$ under restriction.
Finally, since the tips in $U(\bar{s})$ can only be partitioned once (in $\bar{s}$),
every subsplit between $a$ and $d$ must be trivial under restriction.

We use the notation $q(a \ancdec d)$ to represent the probability
of observing subsplits $a$ and $d$ in a random tree from $q$
if $d$ is a valid descendant of $a$ and zero otherwise.
The unconditional probability of a PCSP under a restricted distribution is then,
\begin{equation*}
  \rest{q}{\bar{X}}(\bar{t} \pach \bar{s}) = 1_{\{{\bar{t}} \pach {\bar{s}}\}} \sum_{a:\, \rest{a}{\bar{X}} = \bar{t}} \, \sum_{d:\, \rest{d}{\bar{X}} = \bar{s}} q(a \ancdec d).
\end{equation*}
Then under restriction, the conditional probability of a PCSP given its parent is
\begin{equation} \label{eq:SCDConditional}
  \rest{q}{\bar{X}}(\bar{s} | \bar{t}) = \frac{\rest{q}{\bar{X}}({\bar{t}} \pach {\bar{s}})} {\rest{q}{\bar{X}}(\bar{t})}.
\end{equation}

\subsection*{Supertree support}
Our overall goal is to find a distribution $q$ on topologies that is close
to a set of reference distributions $\{p_i\}$ on taxon sets $X_i$.
An important part of that goal is to understand the \emph{supertree support},
namely the set of building blocks (subsplits or PCSPs)
that have positive probability under $q$,
and is isomorphic to the set of trees that have positive probability under $q$.
We assume that we are given supports for each of our reference distributions $p_i$,
which we will call our \emph{reference supports}.
We will refer to process of finding a mutual supertree support
for the entire taxon set as \emph{mutualizing} or \emph{mutualization}.
The details of how this is done will depend on
whether we are using a CCD or SCD parameterization.

We seek to find a suitable supertree support
for the sake of computational tractability,
so we wish to have as few elements in our supertree support as reasonably possible.
However, any tree that restricts to a tree in each reference support
is as suitable for inclusion in the supertree support as any other,
so we must attempt to include them all.
We codify these objectives as a pair of Requirements,
stated here generally and later more specifically in CCD and SCD contexts.
\begin{enumerate}[{Requirement} 1:]
  \item\label{itm:gen_mut_req_1} To allow an element into the supertree support,
  it must restrict to elements in each reference support,
  \item\label{itm:gen_mut_req_2} Any tree that, under restriction,
  is a subset of every reference support, must be included in the supertree support.
\end{enumerate}

For more than two reference supports,
we propose an incremental approach for building the mutual support:
we start with taxon set $X_1$ and its reference support,
extend to $X_1 \cup X_2$ by mutualizing with the reference support for $X_2$,
then continue to $(X_1 \cup X_2) \cup X_3$, etc.
Thus we will only present an algorithm for the case of $X = X_1 \cup X_2$.
The algorithms extend to finding supertree supports
for multiple sets simultaneously,
but the computations grow exponentially in the number of simultaneous supports
(see Discussion).

\subsubsection*{CCD subsplit supports}

Assume that we have reference subsplit supports $\subspsupport{X_i}$ for each taxon subset $X_i \subset X$
and wish to find a good candidate subsplit support $M(\{\subspsupport{X_i}\})$ for the supertree distribution $q(\tau)$.
We now specialize the Requirements to the CCD case:
\begin{enumerate}[{CCD Requirement} 1:]
  \item\label{itm:mut_req_1} If $s \in M(\{\subspsupport{X_i}\})$, then for each $i$, $\rest{s}{X_i} \in \subspsupport{X_i}$,
  \item\label{itm:mut_req_2} For every tree $\tau$ on $X$ such that, for each $i$, $\rest{\tau}{X_i} \subseteq \subspsupport{X_i}$ then $\tau \subseteq M(\{\subspsupport{X_i}\})$.
\end{enumerate}
Requirement~\ref{itm:mut_req_1} says that any subsplit in the mutualized support must exist (after restriction) in each of the input subsplit supports.
Requirement~\ref{itm:mut_req_2} says that any topology that appears (after restriction) in each of the restricted supports must be present in the mutualized support.
These are in fact fairly strong constraints.
For example, if reference supports do not agree on overlapping sets of taxa,
then the supertree support can be too small or even empty.
However, if the reference supports are restrictions of the true supertree support,
or are supersets of the true support restrictions,
then the mutualized support will cover the true support.
Below we present an algorithm that fulfills these requirements.

Next we explain the Requirements in more detail for the case of two reference supports.
For any $W \subseteq X$,
define $\subspsupport{X}(W)$ as
all subsplits in $\subspsupport{X}$ that divide $W$,
including the trivial subsplit $\subsp{W}{\emptyset}$.
In order for a subsplit $s$ to meet Requirement~\ref{itm:mut_req_1},
$s$ must be a member of both $\subspsupport{X_1}(W)$
and $\subspsupport{X_2}(W)$ after restriction.

To find a collection of subsplits that satisfies Requirement~\ref{itm:mut_req_2},
we take an iterative approach over possible clades
from the root to the tips (Algorithm~\ref{alg:mutual_subsplit}).
Starting with $W=X$,
we consider every pairing of subsplits
$s_1,s_2 = \subsp{Y_1}{Z_1}, \subsp{Y_2}{Z_2}$ from $\subspsupport{X_1}(W) \times \subspsupport{X_2}(W)$.
For each pair,
we generate a set of two potential subsplits
$s_1 \subspcross s_2 \coloneqq
\left\{\subsp{Y_1 \cup Y_2}{Z_1 \cup Z_2},
\subsp{Y_1 \cup Z_2}{Z_1 \cup Y_2} \right\}$.
Note that potential subsplits will frequently have taxa in both child clades,
but we will exclude these invalid subsplits.
We add each nontrivial valid subsplit $s \in s_1 \subspcross s_2$ to the output,
and we add each child clade of $s$ to the stack of clades to consider
if the clade is size two or larger and it has not been visited before.

\begin{algorithm}
\caption{Mutual Subsplit Support algorithm $M(\subspsupport{X_1}, \subspsupport{X_2})$}
\begin{algorithmic}[1]
\State{} Input $\subspsupport{X_1}, \subspsupport{X_2}$
\State{} $X \coloneqq X_1 \cup X_2$
\State{} Stack $[X]$, Visited $\{\}$, Output $\{\}$
\While{Stack not empty}
  \State{} $W \coloneqq$ $\text{pop}(\text{Stack})$
  \State{} add $W$ to Visited
  \State{} $W_1 \coloneqq W \cap X_1, W_2 \coloneqq W \cap X_2$
  \ForAll{$s_1, s_2 \in \subspsupport{X_1}(W_1) \times \subspsupport{X_2}(W_2)$}
  \ForAll{valid, nontrivial $s = \subsp{Y}{Z} \in s_1 \subspcross s_2$}
      \State{} add $s$ to Output
      \If{$Y$ not in Visited and $|Y| \geq 2$}
        push $Y$ to Stack
      \EndIf{}
      \If{$Z$ not in Visited and $|Z| \geq 2$}
        push $Z$ to Stack
      \EndIf{}
    \EndFor{}
  \EndFor{}
\EndWhile{}
\State{} return Output
\end{algorithmic}%
\label{alg:mutual_subsplit}
\end{algorithm}

Here we prove that Algorithm~\ref{alg:mutual_subsplit} meets both of our requirements for the supertree subsplit support.
This first theorem establishes the first Requirement in the CCD case.

\begin{theorem}
  $\rest{M(\subspsupport{X_1}, \subspsupport{X_2})}{X_1} \subseteq \subspsupport{X_1}$
  and $\rest{M(\subspsupport{X_1}, \subspsupport{X_2})}{X_2} \subseteq \subspsupport{X_2}$.
\end{theorem}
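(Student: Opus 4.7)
The plan is to reduce the theorem to the single statement: for every $s$ added by the algorithm, $\rest{s}{X_1} = s_1$ (and symmetrically $\rest{s}{X_2} = s_2$), where $s_1$, $s_2$ are the pair that gave rise to $s$. Then $\rest{s}{X_1} = s_1 \in \subspsupport{X_1}(W_1) \subseteq \subspsupport{X_1}$, and the two inclusions in the theorem follow by symmetry over the two reference supports. By the construction of Algorithm~\ref{alg:mutual_subsplit}, any $s \in M(\subspsupport{X_1}, \subspsupport{X_2})$ arises as a valid, nontrivial element of $s_1 \subspcross s_2$ for some clade $W$ that was popped from the stack, with $s_1 = \subsp{Y_1}{Z_1} \in \subspsupport{X_1}(W_1)$, $s_2 = \subsp{Y_2}{Z_2} \in \subspsupport{X_2}(W_2)$, and $W_i = W \cap X_i$. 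So the task reduces to a purely set-theoretic verification on these clades.

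For the ``straight'' case $s = \subsp{Y_1 \cup Y_2}{Z_1 \cup Z_2}$, I would use that $Y_1, Z_1 \subseteq X_1$ and $Y_2, Z_2 \subseteq X_2$ to rewrite
\[
\rest{s}{X_1} = \subsp{Y_1 \cup (Y_2 \cap X_1)}{Z_1 \cup (Z_2 \cap X_1)},
\]
and then show that the ``stray'' pieces $Y_2 \cap X_1$ and $Z_2 \cap X_1$ are absorbed into $Y_1$ and $Z_1$ respectively. Since $Y_2 \subseteq W_2$, we have $Y_2 \cap X_1 \subseteq W \cap X_2 \cap X_1 \subseteq W_1 = Y_1 \cup Z_1$; validity of $s$ forces $Y_2 \cap Z_1 = \emptyset$, hence $Y_2 \cap X_1 \subseteq Y_1$. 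The identical argument with the roles of $Y_2, Z_2$ swapped shows $Z_2 \cap X_1 \subseteq Z_1$, giving $\rest{s}{X_1} = \subsp{Y_1}{Z_1} = s_1$. The ``crossed'' case $s = \subsp{Y_1 \cup Z_2}{Z_1 \cup Y_2}$ is treated the same way, now using the dual validity conditions $Y_1 \cap Y_2 = \emptyset$ and $Z_1 \cap Z_2 = \emptyset$ to route $Z_2 \cap X_1$ into $Y_1$ and $Y_2 \cap X_1$ into $Z_1$; once again $\rest{s}{X_1} = s_1$.

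The main obstacle is the bookkeeping around the validity condition: one has to notice that validity of the combined subsplit is precisely what pins down which side of $s_1$ each overlap element ends up on, so that the overlap $X_1 \cap X_2$ gets reassembled consistently. One edge case worth flagging is when $s_1$ is the trivial element $\subsp{W_1}{\emptyset}$ in $\subspsupport{X_1}(W_1)$: the same chain of inclusions still yields $\rest{s}{X_1} = \subsp{W_1}{\emptyset}$, which matches $s_1$ exactly and is consistent with the paper's convention that trivial subsplits are implicit in any support. The symmetric argument over $X_2$ is identical after swapping indices, completing the proof.
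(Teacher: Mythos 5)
Your proposal is correct and follows the same route as the paper, which disposes of this theorem in one line by asserting that ``by construction'' every output subsplit restricts into the reference supports; you simply supply the set-theoretic verification (that validity of $s \in s_1 \subspcross s_2$ forces the overlap taxa onto the right sides, so $\rest{s}{X_i} = s_i$) that the paper leaves implicit. The only tiny quibble is your phrasing of the trivial-subsplit edge case: the paper's convention is that trivial subsplits are \emph{excluded} from supports (they are only adjoined to $\subspsupport{X_i}(W_i)$ for the algorithm's pairing step), so a trivial $\rest{s}{X_i}$ is simply dropped from $\rest{M(\subspsupport{X_1}, \subspsupport{X_2})}{X_i}$ rather than matched to an implicit member of $\subspsupport{X_i}$ --- but either reading leaves the containment intact.
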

\begin{proof}
  By construction, every subsplit $s \in M(\subspsupport{X_1}, \subspsupport{X_2})$
  $X_1$-restricts to a subsplit $\rest{s}{X_1} \in \subspsupport{X_1}$
  and $X_2$-restricts to a subsplit $\rest{s}{X_2} \in \subspsupport{X_2}$.
\end{proof}

\begin{lemma}\label{lem:subsplit_mut}
  Suppose there exists a tree topology $\tau$ with taxon set $X$
  such that $\rest{\tau}{X_1} \subseteq \subspsupport{X_1}$
  and $\rest{\tau}{X_2} \subseteq \subspsupport{X_2}$.
  If $s \in \tau$ and the algorithm reaches state $W = U(s)$,
  then $s \in M(\subspsupport{X_1}, \subspsupport{X_2})$.
\end{lemma}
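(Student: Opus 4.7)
The plan is to show that when the algorithm processes the clade $W = U(s)$, the specific pair $s_1 \coloneqq \rest{s}{X_1}$ and $s_2 \coloneqq \rest{s}{X_2}$ will appear among the pairs iterated over in the inner loop, and that $s$ itself will arise as one of the two elements of $s_1 \subspcross s_2$. Writing $s = \subsp{Y}{Z}$, I would first observe that $s_1 = \subsp{Y \cap X_1}{Z \cap X_1}$ is a subsplit that divides $W_1 = W \cap X_1$ (and analogously for $s_2$), so it is a plausible candidate to draw from $\subspsupport{X_1}(W_1)$.

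The next step is to verify that $s_1 \in \subspsupport{X_1}(W_1)$ (and $s_2 \in \subspsupport{X_2}(W_2)$). There are two cases. If $s_1$ is nontrivial, i.e. both $Y \cap X_1$ and $Z \cap X_1$ are nonempty, then the node labeled by $s$ in $\tau$ survives the restriction to $X_1$ and corresponds to the subsplit $s_1 \in \rest{\tau}{X_1}$; by the hypothesis $\rest{\tau}{X_1} \subseteq \subspsupport{X_1}$, we get $s_1 \in \subspsupport{X_1}$, and since $s_1$ divides $W_1$, in fact $s_1 \in \subspsupport{X_1}(W_1)$. If instead $s_1$ is trivial, then $s_1 = \subsp{W_1}{\emptyset}$, which belongs to $\subspsupport{X_1}(W_1)$ by the definition of that set. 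The same dichotomy handles $s_2$.

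Now I would do the combinatorial bookkeeping for $s_1 \subspcross s_2$: by definition it contains the subsplit $\subsp{(Y \cap X_1) \cup (Y \cap X_2)}{(Z \cap X_1) \cup (Z \cap X_2)}$, and since $Y, Z \subseteq W \subseteq X = X_1 \cup X_2$ we have $(Y \cap X_1) \cup (Y \cap X_2) = Y$ and $(Z \cap X_1) \cup (Z \cap X_2) = Z$, so this element is exactly $s$. I need to be a little careful about the ordering of the two child clades in each $s_i$ (since subsplits are unordered pairs), but since $\subspcross$ takes \emph{both} pairings the same conclusion holds regardless of which ordering one picks, including in the corner case where one or both of $s_1, s_2$ are trivial.

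Finally, since $s \in \tau$, it is by assumption a valid and nontrivial subsplit, and therefore it satisfies the ``valid, nontrivial'' filter on the inner $\forall$-loop. The algorithm therefore adds $s$ to Output when it processes the pair $(s_1, s_2)$ at state $W$. The only subtlety worth flagging is the handling of trivial $s_i$; this is why $\subspsupport{X_i}(W_i)$ was defined to explicitly include $\subsp{W_i}{\emptyset}$ and why $\subspcross$ enumerates both pairings, and I expect this bookkeeping to be the one place where care is required.
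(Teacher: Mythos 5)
Your proposal is correct and follows essentially the same route as the paper's proof: identify $W_i = U(\rest{s}{X_i})$, argue that $\rest{s}{X_1}$ and $\rest{s}{X_2}$ are among the pairs considered, and observe that one element of $\rest{s}{X_1} \subspcross \rest{s}{X_2}$ reassembles to $s$ since $X = X_1 \cup X_2$. Your explicit case split on whether $\rest{s}{X_i}$ is trivial (and your remark about the unordered pairings in $\subspcross$) is a point the paper's proof passes over silently, but it is the same argument.
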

\begin{proof}
  Suppose subsplit $s = \subsp{Y}{Z}$,
  then we know
  $\rest{s}{X_1} = \subsp{Y \cap X_1}{Z \cap X_1}$
  and $\rest{s}{X_2} = \subsp{Y \cap X_2}{Z \cap X_2}$.
  By assumption $W = U(s)$,
  $\rest{s}{X_1} \in \subspsupport{X_1}$, and $\rest{s}{X_2} \in \subspsupport{X_2}$,
  so the algorithm considers
  subsplits in $\subspsupport{X_1}$ that divide $W_1 \coloneqq W \cap X_1$,
  and subsplits in $\subspsupport{X_2}$ that divide $W_2 \coloneqq W \cap X_2$.
  We know
  \begin{align*}
    W_1 &= W \cap X_1 = U(s) \cap X_1 = (Y \cup Z) \cap X_1 \\
        &= (Y \cap X_1) \cup (Z \cap X_1) = U(\rest{s}{X_1}).
  \end{align*}
  Similarly, $W_2 = U(\rest{s}{X_2})$,
  so the algorithm considers $\rest{s}{X_1}$ from $\subspsupport{X_1}$
  and $\rest{s}{X_2}$ from $\subspsupport{X_2}$ at this step.
  Then one of the subsplits on $X$ that the algorithm generates is
  \begin{align*}
    \subsp{[Y \cap X_1] \cup [Y \cap X_2]}{[Z \cap X_1] \cup [Z \cap X_2]} &= \subsp{Y \cap [X_1 \cup X_2]}{Z \cap [X_1 \cup X_2]} \\
    &= \subsp{Y \cap X}{Z \cap X} = s.
  \end{align*}
  We know $s$ is valid and nontrivial,
  so $s \in M(\subspsupport{X_1}, \subspsupport{X_2})$.
\end{proof}

We can now establish the second Requirement in the CCD case.
\begin{theorem}
  If there exists a tree topology $\tau$ with taxon set $X$
  such that $\rest{\tau}{X_1} \subseteq \subspsupport{X_1}$
  and $\rest{\tau}{X_2} \subseteq \subspsupport{X_2}$,
  then $\tau \subseteq M(\subspsupport{X_1}, \subspsupport{X_2})$.
\end{theorem}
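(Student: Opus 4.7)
The plan is to reduce the theorem to Lemma~\ref{lem:subsplit_mut} by showing that the algorithm's outer \texttt{while} loop reaches state $W = U(s)$ for every $s \in \tau$. Once that is established, the lemma immediately gives $s \in M(\subspsupport{X_1}, \subspsupport{X_2})$ for each $s \in \tau$, which is exactly the conclusion $\tau \subseteq M(\subspsupport{X_1}, \subspsupport{X_2})$.

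To show every parent clade in $\tau$ is eventually popped from the stack, I would induct on the distance of $s$ from the root of $\tau$. For the base case, let $r$ be the root subsplit of $\tau$, so $U(r) = X$; since the algorithm initializes the stack with $[X]$, the first iteration pops $W = X = U(r)$, and Lemma~\ref{lem:subsplit_mut} places $r$ in $M(\subspsupport{X_1}, \subspsupport{X_2})$. For the inductive step, assume $s \in \tau$ has been processed, i.e., $U(s)$ was popped and Lemma~\ref{lem:subsplit_mut} was applied to yield $s \in M(\subspsupport{X_1}, \subspsupport{X_2})$. When the algorithm handles this pop, it iterates over pairings of subsplits in $\subspsupport{X_1}(W_1) \times \subspsupport{X_2}(W_2)$; by the argument in the proof of Lemma~\ref{lem:subsplit_mut}, the pair $(\rest{s}{X_1}, \rest{s}{X_2})$ is among them and $s \in s_1 \subspcross s_2$ is generated. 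Each non-singleton child clade $Y$ or $Z$ of $s$ is therefore pushed onto the stack unless it was already visited. Either way, the child clade is (or has been) popped, and since $\tau$ is bifurcating each non-singleton child clade $C$ of $s$ equals $U(s')$ for a unique $s' \in \tau$, so the induction continues to $s'$.

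The only subtle bookkeeping is that a clade might already be in the \emph{Visited} set when the algorithm tries to push it. This is not a problem: if $U(s') \in$ \emph{Visited} at the moment we would push it, then $U(s')$ was popped at some earlier iteration, so Lemma~\ref{lem:subsplit_mut} was already invoked for the (unique in $\tau$) subsplit $s'$ with $U(s') = U(s')$, placing $s'$ in $M(\subspsupport{X_1}, \subspsupport{X_2})$. Either way, $s'$ ends up in the mutualized support. Since every $s \in \tau$ is reached from the root by a finite chain of parent-child steps, the induction terminates and every $s \in \tau$ lies in $M(\subspsupport{X_1}, \subspsupport{X_2})$.

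The main obstacle is therefore not algebraic but procedural: one must carefully connect the recursive clade structure of $\tau$ to the algorithm's stack-based traversal, and in particular argue that the \emph{Visited} check never prevents a needed subsplit from being added. Once that link is made, Lemma~\ref{lem:subsplit_mut} does all the work.
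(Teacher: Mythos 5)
Your proof is correct and follows essentially the same route as the paper's: a root-to-tips induction over the subsplits of $\tau$, invoking Lemma~\ref{lem:subsplit_mut} once the algorithm is shown to reach state $W = U(s)$ for each $s \in \tau$. Your explicit handling of the \emph{Visited}-set edge case is a detail the paper's own proof glosses over, but it does not change the structure of the argument.
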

\begin{proof}
  We use a proof by induction recursively
  from root to tips over the subsplits in $\tau$.
  Our base case is the root split of $\tau$.
  The algorithm begins with $W = X$ in the stack,
  so by Lemma~\ref{lem:subsplit_mut} $\tau$'s root split will be in the output.
  Our inductive step for general $s \in \tau$
  allows us to assume that $s$'s parent $\parentof{s}$
  has already been visited and is already in the output.
  Since $\parentof{s} \pach s$ and $s \in \tau$,
  we know $U(s) \in \parentof{s}$ and $|U(s)| > 1$,
  so $U(s)$ will already be in the stack
  and the algorithm will eventually reach state $W = U(s)$.
  Then by Lemma~\ref{lem:subsplit_mut} $s$ will be in the output.
  Therefore, by induction, all $s \in \tau$ are in $M(\subspsupport{X_1}, \subspsupport{X_2})$,
  so $\tau \subseteq M(\subspsupport{X_1}, \subspsupport{X_2})$.
\end{proof}

\subsubsection*{SCD PCSP supports}

For SCD-parameterized SBNs, we need to consider the supertree PCSP support.
Suppose that we have reference PCSP supports $\pcspsupport{X_i}$ for each taxon subset $X_i \subset X$.
We have requirements for PCSP support mutualization that parallel
our requirements for constructing a subsplit support.
Our Requirements take the following form in the SCD case:
\begin{enumerate}[{SCD Requirement} 1:]
  \item\label{itm:pcsp_mut_req_1} If $({t}\pach{s}) \in M(\{\pcspsupport{X_i}\})$,
  then for each $i$,
  there exists a path $(a_i \ancdec t \pach s) \subset M(\{\pcspsupport{X_i}\})$
  and a subsplit $u_i$ in $\pcspsupport{X_i}$
  such that $\rest{a_i}{X_i} = u_i$ and $U(\rest{s}{X_i}) \in u_i$.
  \item\label{itm:pcsp_mut_req_2} For every tree $\tau$ on $X$ such that,
  for each $i$, $\rest{\tau}{X_i} \subseteq \pcspsupport{X_i}$
  then $\tau \subseteq M(\{\pcspsupport{X_i}\})$.
\end{enumerate}
\noindent Please see Figure~\ref{fig:restriction_example} for an
illustration of Requirement 1.

Our PCSP mutualization algorithm, laid out in Algorithm~\ref{alg:mutual_pcsp},
largely follows the structure of Algorithm~\ref{alg:mutual_subsplit},
with a few subtle differences.
We use the notation $\pcspsupport{X}(\sscl{t}{W})$ to represent
the set of all valid child subsplits of $\sscl{t}{W}$ in $\pcspsupport{X}$,
including the trivial subsplit $\subsp{W}{\emptyset}$.
Because parent subsplits can become trivial under restriction,
we need to perform additional bookkeeping in Algorithm~\ref{alg:mutual_pcsp}.
The items in our recursion stack (line~\ref{alg:line:pcsp_stack})
contain three pieces of information:
the parent subsplit and clade under consideration in the full taxon set,
the most recent parent subsplit in $\pcspsupport{X_1}$,
and the most recent parent subsplit in $\pcspsupport{X_2}$.
The additional \textbf{if} statements (lines~\ref{alg:line:pcsp_if1} and~\ref{alg:line:pcsp_if2})
in the main loop capture both most recent parent subsplits in their respective supports.

\begin{algorithm}
\caption{Mutual PCSP Support algorithm $M(\pcspsupport{X_1}, \pcspsupport{X_2})$}
\begin{algorithmic}[1]
\State{} Input $\pcspsupport{X_1}, \pcspsupport{X_2}$
\State{} $X \coloneqq X_1 \cup X_2$
\State{} Stack $[(\sscl{\rootparent{X}}{X}, \sscl{\rootparent{X_1}}{X_1}, \sscl{\rootparent{X_2}}{X_2})]$, Visited $\{\}$, Output $\{\}$
\While{Stack not empty}
  \State{} $(\sscl{t}{W}, \sscl{t_1}{W_1}, \sscl{t_2}{W_2}) \coloneqq \text{pop}(\text{Stack})$\label{alg:line:pcsp_stack}
  \State{} add $(\sscl{t}{W}, \sscl{t_1}{W_1}, \sscl{t_2}{W_2})$ to Visited
  \ForAll{$s_1, s_2 \in \pcspsupport{X_1}(\sscl{t_1}{W_1}) \times \pcspsupport{X_2}(\sscl{t_2}{W_2})$}
    \ForAll{valid $s = \subsp{Y}{Z} \in s_1 \subspcross s_2$}
      \State{} add $(t \pach s)$ to Output
      \State{} $Y_1 \coloneqq Y \cap X_1, Y_2 \coloneqq Y \cap X_2$
      \State{} $Z_1 \coloneqq Z \cap X_1, Z_2 \coloneqq Z \cap X_2$
      \If{$s_1 \in \pcspsupport{X_1}$}\label{alg:line:pcsp_if1}
        \State{} $u_1 \coloneqq s_1$
      \Else{}
        \State{} $u_1 \coloneqq t_1$
      \EndIf{}
      \If{$s_2 \in \pcspsupport{X_2}$}\label{alg:line:pcsp_if2}
        \State{} $u_2 \coloneqq s_2$
      \Else{}
        \State{} $u_2 \coloneqq t_2$
      \EndIf{}
      \If{$|Y| \geq 2$ and $(\sscl{s}{Y}, \sscl{u_1}{Y_1}, \sscl{u_2}{Y_2})$ not in Visited}
        \State{} push $(\sscl{s}{Y}, \sscl{u_1}{Y_1}, \sscl{u_2}{Y_2})$ to Stack
      \EndIf{}
      \If{$|Z| \geq 2$ and $(\sscl{s}{Z}, \sscl{u_1}{Z_1}, \sscl{u_2}{Z_2})$ not in Visited}
        \State{} push $(\sscl{s}{Z}, \sscl{u_1}{Z_1}, \sscl{u_2}{Z_2})$ to Stack
      \EndIf{}
    \EndFor{}
  \EndFor{}
\EndWhile{}
\State{} return Output
\end{algorithmic}%
\label{alg:mutual_pcsp}
\end{algorithm}

The need for and operation of these additional constructions is best illustrated with an example, which we provide in the Appendix.
This example uses only 4 taxa and is written in great detail.

Next we prove that Algorithm~\ref{alg:mutual_pcsp} meets both of our requirements for the PCSP subsplit support.
This first theorem establishes the first Requirement in the SCD case.
\begin{theorem}
  For every $(t \pach s) \in M(\pcspsupport{X_1}, \pcspsupport{X_2})$ and each $i = 1,2$,
  there exists a path $(a_i \ancdec t \pach s) \subseteq M(\pcspsupport{X_1}, \pcspsupport{X_2})$
  and a subsplit $u_i$ in $\pcspsupport{X_i}$
  such that $\rest{a_i}{X_i} = u_i$ and $U(\rest{s}{X_i}) \in u_i$.
\end{theorem}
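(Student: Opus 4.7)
My plan is to prove the theorem by induction on the execution of Algorithm~\ref{alg:mutual_pcsp}, carried by a loop invariant on the recursion stack. Specifically, I would show that whenever a triple $(\sscl{t}{W}, \sscl{t_1}{W_1}, \sscl{t_2}{W_2})$ lives on the stack, then for each $i \in \{1,2\}$ there exists a subsplit $a_i$ with a path $a_i \ancdec t$ entirely contained in the current Output, such that $\rest{a_i}{X_i} = t_i \in \pcspsupport{X_i}$ and $W_i = W \cap X_i$ is a child clade of $t_i$. Once this invariant is established, the theorem follows almost immediately: when $(t \pach s)$ is put into Output after popping $(\sscl{t}{W}, \sscl{t_1}{W_1}, \sscl{t_2}{W_2})$ with $W = U(s)$, I would set $u_i := t_i$ and let $a_i$ be supplied by the invariant; then $\rest{a_i}{X_i} = u_i \in \pcspsupport{X_i}$, the path $a_i \ancdec t \pach s$ lies in $M$, and $U(\rest{s}{X_i}) = W \cap X_i = W_i$ is a child clade of $t_i = u_i$, exactly as required.

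The bulk of the work is in proving the invariant by induction on stack operations. The base case is the initial stack element, where $t = \rootparent{X}$; I would take $a_i = \rootparent{X}$ and $t_i = \rootparent{X_i}$, treating the root trivial subsplit as implicitly in $\pcspsupport{X_i}$. For the inductive step, I would examine what happens when a new triple $(\sscl{s}{Y}, \sscl{u_1}{Y_1}, \sscl{u_2}{Y_2})$ is pushed after $(t \pach s)$ is added to Output. The algorithm's \textbf{if} statements on lines~\ref{alg:line:pcsp_if1} and~\ref{alg:line:pcsp_if2} give two cases per index $i$. When $s_i \in \pcspsupport{X_i}$, so $u_i = s_i$, the new ancestor would be $a'_i := s$ itself --- which just entered Output --- and the needed fact $\rest{s}{X_i} = s_i$ comes from the definition of $s_1 \subspcross s_2$ together with $s$ passing the validity filter; the new clade $Y_i$ is automatically a child clade of $s_i = u_i$. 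When $s_i \notin \pcspsupport{X_i}$, this can only happen when $s_i$ is the trivial subsplit $\subsp{W_i}{\emptyset}$, so $\rest{s}{X_i}$ is trivial and the previous $a_i$ continues to work for the extended path $a_i \ancdec t \pach s$; the new clade $Y_i$ equals $W_i$, which is already a child clade of $t_i = u_i$ by the inductive hypothesis.

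The hard part is the bookkeeping in the first case of the inductive step: concluding that $\rest{s}{X_i} = s_i$ when $s_i$ is non-trivial requires chasing the definitions of $\subspcross$ and validity carefully, especially when $X_1$ and $X_2$ share taxa so that the two candidate child-clade pairings in $s_1 \subspcross s_2$ can produce different outcomes and only the consistent ones survive the validity filter. A secondary subtlety is the uniform treatment of the root case, where $\rootparent{X_i}$ needs to be regarded as a tacit member of $\pcspsupport{X_i}$ so that the invariant applies at the initial triple without carving out a separate argument.
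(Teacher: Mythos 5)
Your proposal is correct and takes essentially the same route as the paper: an induction over the algorithm's execution in which the second and third components of each stack triple are shown to record the most recent ancestors of the current node that restrict into $\pcspsupport{X_1}$ and $\pcspsupport{X_2}$ respectively. Your explicit loop invariant, and the observation that validity of $s$ is what forces $\rest{s}{X_i} = s_i$, simply make rigorous what the paper's very terse proof ("by induction, we know $a_i \ancdec t$") leaves implicit.
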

\begin{proof}
  For any parent subsplit $t$,
  every candidate child subsplit $s$ is constructed from the children
  (possibly trivial)
  of the most recent ancestors present in their respective reference support $\{\rest{a_i}{X_i} = u_i\}$.
  By induction, we know $a_i \ancdec t$.
  Finally, by construction $t \pach s$, $\rest{a_i}{X_i} = u_i$,
  and $U(\rest{s}{X_i}) \in u_i$.
\end{proof}

We now begin preparations for the proof of the second Requirement in the SCD case.
\begin{lemma}\label{lem:pcsp_restriction}
  If $\rest{\tau}{\bar{X}} \subseteq \pcspsupport{\bar{X}}$,
  then for every PCSP $(t \pach s) \in \tau$,
  there exists a path $(a \ancdec t \pach s) \subseteq \tau$
  and a subsplit $u$ in $\pcspsupport{\bar{X}}$
  such that $\rest{a}{\bar{X}} = u$ and $U(\rest{s}{\bar{X}}) \in u$.
  If $\rest{s}{\bar{X}}$ is nontrivial,
  then $(u \pach \rest{s}{\bar{X}}) \in \pcspsupport{\bar{X}}$.
\end{lemma}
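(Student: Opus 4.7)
The plan is to locate the required ancestor $a$ by walking up the chain of parents of $s$ in $\tau$ until the first ancestor whose restriction to $\bar{X}$ is nontrivial, falling back to $\rootparent{X}$ if every proper ancestor restricts trivially, and to set $u = \rest{a}{\bar{X}}$. Intuitively, each ancestor with trivial restriction contributes a sibling clade entirely disjoint from $\bar{X}$ and hence gets collapsed when forming $\rest{\tau}{\bar{X}}$; the first nontrivially restricting ancestor is therefore the effective parent of $\rest{s}{\bar{X}}$ in the restricted tree.

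Concretely, let $a_0 = t$ and $a_{k+1} = \parentof{a_k}$, and let $C_k$ denote the child clade of $a_k$ containing $U(s)$, so that $C_0 = U(s)$ and $C_k = U(a_{k-1})$ for $k \geq 1$. A short induction on $k$ shows that if $\rest{a_0}{\bar{X}}, \ldots, \rest{a_{k-1}}{\bar{X}}$ are all trivial, then $C_k \cap \bar{X} = U(s) \cap \bar{X} = U(\rest{s}{\bar{X}})$: triviality of $\rest{a_{k-1}}{\bar{X}}$ forces the non-$C_{k-1}$ child clade of $a_{k-1}$ to have empty intersection with $\bar{X}$, so $C_k = U(a_{k-1})$ restricts to the same set as $C_{k-1}$ does.

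Next, I would pick $a = a_j$ for the smallest $j$ such that $\rest{a_j}{\bar{X}}$ is nontrivial, defaulting to $a = \rootparent{X}$ if no such $j$ exists along the chain in $\tau$, and set $u = \rest{a}{\bar{X}}$. The property $U(\rest{s}{\bar{X}}) \in u$ then follows immediately from the preceding induction. For $u \in \pcspsupport{\bar{X}}$: in the generic case $a \in \tau$ with $\rest{a}{\bar{X}}$ nontrivial, so $u$ appears as a subsplit of $\rest{\tau}{\bar{X}} \subseteq \pcspsupport{\bar{X}}$; in the fallback case $u = \rootparent{\bar{X}}$, which belongs to $\pcspsupport{\bar{X}}$ by the paper's convention that the formal root parent is always included. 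Finally, when $\rest{s}{\bar{X}}$ is nontrivial, the collapse of the trivially-restricting intermediates makes $u$ the immediate parent of $\rest{s}{\bar{X}}$ in $\rest{\tau}{\bar{X}}$, so $(u \pach \rest{s}{\bar{X}}) \in \rest{\tau}{\bar{X}} \subseteq \pcspsupport{\bar{X}}$.

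The main obstacle I anticipate is the careful bookkeeping in the induction on the relevant child clade, together with handling the fallback case $a = \rootparent{X}$ consistently with the paper's formalism for the root parent; the core combinatorial fact that drives the proof---an intermediate trivially-restricting subsplit contributes no taxa to $\bar{X}$ on the sibling side of $s$'s lineage---is what makes the collapse clean and the relevant child clade stable as we ascend the chain.
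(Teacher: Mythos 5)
Your proposal is correct and follows essentially the same route as the paper's proof: both walk up the ancestor chain from $t$ to the first subsplit whose restriction is nontrivial (defaulting to the root parent), using the chain of equalities $U(\rest{s}{\bar{X}}) = U(\rest{a_k}{\bar{X}})$ through the trivially restricting intermediates, and both conclude the final PCSP claim from the collapse of those intermediates under tree restriction. Your explicit induction on the child clade $C_k$ is just a more carefully bookkept version of the paper's ``continue this reasoning and chain of equalities'' step.
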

\begin{proof}
  If $\rest{t}{\bar{X}}$ is a subsplit in $\pcspsupport{\bar{X}}$,
  we know $U(\rest{s}{\bar{X}}) \in \rest{t}{\bar{X}}$,
  so $u = \rest{t}{\bar{X}}$, and we are done.
  If $\rest{t}{\bar{X}}$ is not a subsplit in $\pcspsupport{\bar{X}}$,
  i.e. $\rest{t}{\bar{X}}$ is trivial, but is not $\rootparent{\bar{X}}$ nor a singleton,
  then we know $U(\rest{s}{\bar{X}}) = U(\rest{t}{\bar{X}}) \in \rest{\parentof{t}}{\bar{X}}$.
  If $\rest{\parentof{t}}{\bar{X}}$ is a subsplit in $\pcspsupport{\bar{X}}$
  then we are done.
  Otherwise, we can continue this reasoning and chain of equalities,
  proceeding up the tree until we reach a parent subsplit in $\pcspsupport{\bar{X}}$ or $\rootparent{\bar{X}}$,
  which we know is a parent subsplit in $\pcspsupport{\bar{X}}$.
  If $\rest{s}{\bar{X}}$ is nontrivial,
  then there is an uninterrupted path of trivial subsplits
  between $s$ and the $a$ we find above.
  Thus, by tree restriction,
  $(u \pach \rest{s}{\bar{X}})$ is a valid PCSP in $\pcspsupport{\bar{X}}$.
\end{proof}

\begin{lemma}\label{lem:pcsp_mut}
  Suppose there exists a tree topology $\tau$ with taxon set $X$
  such that $\rest{\tau}{X_1} \subseteq \pcspsupport{X_1}$
  and $\rest{\tau}{X_2} \subseteq \pcspsupport{X_2}$.
  If $(t \pach s) \in \tau$ and the algorithm reaches
  state $(\sscl{t}{U(s)}, \sscl{\rest{a_1}{X_1}}{U(s) \cap X_1}, \sscl{\rest{a_2}{X_2}}{U(s) \cap X_2})$,
  where $a_1$ is the most recent ancestor of $s$
  that restricts to a subsplit in $\pcspsupport{X_1}$,
  and $a_2$ is most recent ancestor of $s$
  that restricts to a subsplit in $\pcspsupport{X_2}$,
  then $(t \pach s) \in M(\pcspsupport{X_1}, \pcspsupport{X_2})$.
\end{lemma}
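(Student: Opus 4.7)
The plan is to imitate the proof of Lemma~\ref{lem:subsplit_mut} at the subsplit level while appealing to Lemma~\ref{lem:pcsp_restriction} to certify that the relevant restricted parent--child pairs live in the reference PCSP supports. Write $s = \subsp{Y}{Z}$, so that $\rest{s}{X_i} = \subsp{Y \cap X_i}{Z \cap X_i}$ and $U(\rest{s}{X_i}) = U(s) \cap X_i$ for $i \in \{1,2\}$. At the hypothesized state, the inner double loop of Algorithm~\ref{alg:mutual_pcsp} ranges over pairs $(s_1, s_2) \in \pcspsupport{X_1}(\sscl{u_1}{U(s) \cap X_1}) \times \pcspsupport{X_2}(\sscl{u_2}{U(s) \cap X_2})$ with $u_i \coloneqq \rest{a_i}{X_i}$; the first order of business is to verify that the specific pair $(\rest{s}{X_1}, \rest{s}{X_2})$ is actually among those enumerated.

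For each $i$ the argument splits into two cases. If $\rest{s}{X_i}$ is trivial then it equals $\subsp{U(s) \cap X_i}{\emptyset}$, which the text just before Algorithm~\ref{alg:mutual_pcsp} explicitly places in $\pcspsupport{X_i}(\sscl{u_i}{U(s) \cap X_i})$. If $\rest{s}{X_i}$ is nontrivial, I apply Lemma~\ref{lem:pcsp_restriction} to $\rest{\tau}{X_i} \subseteq \pcspsupport{X_i}$ and to the PCSP $(t \pach s) \in \tau$; the lemma then produces a subsplit $u$ in $\pcspsupport{X_i}$ with $U(\rest{s}{X_i}) \in u$ and $(u \pach \rest{s}{X_i}) \in \pcspsupport{X_i}$. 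Because the construction in Lemma~\ref{lem:pcsp_restriction} walks up the tree from $s$ until the first ancestor whose $X_i$-restriction is a genuine subsplit in $\pcspsupport{X_i}$, this $u$ is precisely $\rest{a_i}{X_i} = u_i$, so $\rest{s}{X_i}$ is a valid child of $\sscl{u_i}{U(s) \cap X_i}$. Either way, $(\rest{s}{X_1}, \rest{s}{X_2})$ appears in the Cartesian product.

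With that pair selected, $\rest{s}{X_1} \subspcross \rest{s}{X_2}$ contains the candidate $\subsp{(Y \cap X_1) \cup (Y \cap X_2)}{(Z \cap X_1) \cup (Z \cap X_2)}$, which by the same union/intersection identity used in Lemma~\ref{lem:subsplit_mut} collapses to $\subsp{Y}{Z} = s$. Since $s \in \tau$ is valid, the corresponding iteration of the inner loop executes its add-to-output step, placing $(t \pach s)$ in $M(\pcspsupport{X_1}, \pcspsupport{X_2})$.

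The main obstacle will be the clean identification of the subsplit $u$ supplied by Lemma~\ref{lem:pcsp_restriction} with $\rest{a_i}{X_i}$: Lemma~\ref{lem:pcsp_restriction}'s ancestor is constructed as the first one whose restriction is a nontrivial subsplit of $\pcspsupport{X_i}$, which matches the hypothesis's definition of $a_i$ verbatim. A minor edge case worth spelling out is $a_i = \rootparent{X_i}$ at the very top of $\tau$, which is handled because $\rootparent{\bar{X}}$ is treated as a parent subsplit in $\pcspsupport{\bar{X}}$ both in the SCD support definition and in the proof of Lemma~\ref{lem:pcsp_restriction}.
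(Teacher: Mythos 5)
Your proposal is correct and follows essentially the same route as the paper's own proof: establish $U(s)\cap X_i = U(\rest{s}{X_i})$, use Lemma~\ref{lem:pcsp_restriction} together with the trivial/nontrivial case split to show $\rest{s}{X_i}$ is among the children enumerated at the hypothesized state, and then recover $s$ via the $\subspcross$ union identity. Your extra care in identifying the lemma's $u$ with $\rest{a_i}{X_i}$ and noting the $\rootparent{X_i}$ edge case is a slight elaboration of what the paper leaves implicit, not a different argument.
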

\begin{proof}
  Similar to Algorithm~\ref{alg:mutual_subsplit},
  in general we know
  \begin{align*}
    W_1 &\coloneqq U(s) \cap X_1 = (Y \cup Z) \cap X_1 \\
        &= (Y \cap X_1) \cup (Z \cap X_1) = U(\rest{s}{X_1}),
  \end{align*}
  and via the same logic $W_2 = U(\rest{s}{X_2})$.
  Let $u_1 = \rest{a_1}{X_1}$ and $u_2 = \rest{a_2}{X_2}$.
  By Lemma~\ref{lem:pcsp_restriction},
  we know $W_1 \in u_1$ and $W_2 \in u_2$.
  If $s_1 = \rest{s}{X_1}$ is trivial,
  then $U(s_1) = W_1$ and is in $\pcspsupport{X_1}(\sscl{t_1}{W_1})$ by construction.
  If $s_1$ is nontrivial,
  then also by Lemma~\ref{lem:pcsp_restriction}
  we know $s_1 \in \pcspsupport{X_1}(\sscl{t_1}{W_1})$.
  Similarly, $s_2 = \rest{s}{X_2} \in \pcspsupport{X_1}(\sscl{t_2}{W_2})$,
  so Algorithm~\ref{alg:mutual_pcsp} considers $s_1$ and $s_2$ at this step.
  Then one of the subsplits on $X$ that the algorithm generates is
  \begin{align*}
    \subsp{[Y \cap X_1] \cup [Y \cap X_2]}{[Z \cap X_1] \cup [Z \cap X_2]} &=
    \subsp{Y \cap [X_1 \cup X_2]}{Z \cap [X_1 \cup X_2]} \\
    &= \subsp{Y \cap X}{Z \cap X} = s.
  \end{align*}
  We know $s$ is nontrivial,
  so $(t \pach s)$ is added to $M(\pcspsupport{X_1}, \pcspsupport{X_2})$.
\end{proof}

The following theorem establishes the second Requirement in the SCD case.
\begin{theorem}
  If there exists a tree topology $\tau$ with taxon set $X$
  such that $\rest{\tau}{X_1} \subseteq \pcspsupport{X_1}$
  and $\rest{\tau}{X_2} \subseteq \pcspsupport{X_2}$,
  then $\tau \subseteq M(\pcspsupport{X_1}, \pcspsupport{X_2})$.
\end{theorem}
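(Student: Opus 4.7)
The plan is to mirror the CCD proof of the second Requirement, doing induction from root to tips over the PCSPs of $\tau$, but using Lemma~\ref{lem:pcsp_mut} in place of Lemma~\ref{lem:subsplit_mut}. The main complication relative to the CCD proof is that Lemma~\ref{lem:pcsp_mut} requires the stack state to contain not just the current parent subsplit $t$ and clade $U(s)$, but also the restricted subsplits $\rest{a_1}{X_1}$ and $\rest{a_2}{X_2}$, where $a_i$ is the most recent ancestor of $s$ whose restriction to $X_i$ is a nontrivial subsplit in $\pcspsupport{X_i}$. So most of the work is showing that this bookkeeping is maintained correctly by the algorithm.

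For the base case, take the root PCSP $(\rootparent{X} \pach r) \in \tau$. The stack is initialized with the triple $(\sscl{\rootparent{X}}{X}, \sscl{\rootparent{X_1}}{X_1}, \sscl{\rootparent{X_2}}{X_2})$, which matches the hypothesis of Lemma~\ref{lem:pcsp_mut} with $a_1 = \rootparent{X_1}$ and $a_2 = \rootparent{X_2}$ (trivially the most recent ancestors, by the convention that $\rootparent{X_i}$ is a parent subsplit in $\pcspsupport{X_i}$). The lemma immediately gives $(\rootparent{X} \pach r) \in M(\pcspsupport{X_1}, \pcspsupport{X_2})$.

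For the inductive step, fix $(t \pach s) \in \tau$ with $t$ not the root parent, and assume every strict ancestor PCSP of $(t \pach s)$ in $\tau$ has already been placed in the output by a corresponding stack visit. In particular, when the algorithm processed the stack item corresponding to $(\parentof{t} \pach t)$, it enumerated candidate child subsplits and, by Lemma~\ref{lem:pcsp_mut} applied inductively, produced $t$ among them. At that point the algorithm pushed a new stack item of the form $(\sscl{t}{U(s)}, \sscl{u_1}{U(s) \cap X_1}, \sscl{u_2}{U(s) \cap X_2})$, where $u_i$ is set via the conditionals on lines~\ref{alg:line:pcsp_if1} and~\ref{alg:line:pcsp_if2}: $u_i = \rest{t}{X_i}$ if $\rest{t}{X_i}$ is a nontrivial subsplit in $\pcspsupport{X_i}$, and $u_i$ is inherited otherwise. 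A short secondary induction up the tree shows this update rule yields exactly $u_i = \rest{a_i}{X_i}$ for $a_i$ the most recent ancestor of $s$ (equivalently, of $t$) whose $X_i$-restriction is a nontrivial subsplit in $\pcspsupport{X_i}$. Thus the pushed item matches the hypothesis of Lemma~\ref{lem:pcsp_mut} for $(t \pach s)$, so $(t \pach s) \in M(\pcspsupport{X_1}, \pcspsupport{X_2})$.

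The main obstacle, as noted, is verifying that the algorithm's $u_i$ bookkeeping really does track the most recent ancestor whose restriction lies in $\pcspsupport{X_i}$; once that alignment is pinned down, Lemma~\ref{lem:pcsp_mut} does all the combinatorial work. Since every $(t \pach s) \in \tau$ is reached by this induction, we conclude $\tau \subseteq M(\pcspsupport{X_1}, \pcspsupport{X_2})$.
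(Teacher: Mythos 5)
Your proposal is correct and follows essentially the same route as the paper's own proof: a root-to-tips induction over the PCSPs of $\tau$, with Lemma~\ref{lem:pcsp_mut} doing the combinatorial work and the inductive step devoted to checking that the algorithm's $u_1, u_2$ bookkeeping (via lines~\ref{alg:line:pcsp_if1} and~\ref{alg:line:pcsp_if2}) tracks the most recent ancestors whose restrictions lie in the reference supports. The only cosmetic difference is that you package this check as a secondary induction, whereas the paper phrases it as a one-step case analysis on whether $\rest{t}{X_i}$ is a subsplit in $\pcspsupport{X_i}$, citing Lemma~\ref{lem:pcsp_restriction} for the existence of the $a_i$.
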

\begin{proof}
  We proceed via recursive proof by induction over all PCSPs $(t \pach s) \in \tau$,
  with $s = \subsp{Y}{Z}$.
  Our base case is the algorithm's first state
  $\sscl{t}{W} = \sscl{\rootparent{X}}{X}$,
  $t_1 = \rootparent{X_1}$, and $t_2 = \rootparent{X_2}$.
  We know $s$ only has one ancestor, $\rootparent{X}$,
  which restricts to $\rootparent{X_1}$ in $\pcspsupport{X_1}$
  and to $\rootparent{X_2}$ in $\pcspsupport{X_2}$.
  This satisfies the criteria of Lemma~\ref{lem:pcsp_mut},
  so $(t \pach s)$ will be in $M(\pcspsupport{X_1}, \pcspsupport{X_2})$.

  Our inductive step for general $(t \pach s) \in \tau$ uses the same argument,
  but uses the inductive assumption
  that PCSP $(\parentof{t} \pach t)$ was previously added to the output.
  Given this assumption,
  we will show that the algorithm constructs the state triplet
  $(\sscl{t}{U(s)}, \sscl{\rest{a_1}{X_1}}{U(s) \cap X_1}, \sscl{\rest{a_2}{X_2}}{U(s) \cap X_2})$.
  If the algorithm reaches this state,
  then Lemma~\ref{lem:pcsp_mut} guarantees that
  $(t \pach s)$ will be in $M(\pcspsupport{X_1}, \pcspsupport{X_2})$.
  Lemma~\ref{lem:pcsp_restriction} guarantees that such
  $\rest{a_1}{X_1}$ and $\rest{a_2}{X_2}$ exist.

  Since $(\parentof{t} \pach t)$ has already been visited and $|U(s)| > 2$,
  we know that there is a triplet in the stack
  with $\sscl{t}{U(s)}$ as the first component.
  Considering the other components of this triplet,
  if $t$ restricts to a subsplit in $\pcspsupport{X_1}$,
  then $a_1 = t$ and $u_1 = \rest{t}{X_1}$.
  If not, then $t$ and $s$ have the same most recent ancestor
  that restricts to a subsplit in $\pcspsupport{X_1}$,
  so the algorithm passes it along as $u_1$.
  Either way, $\sscl{\rest{a_1}{X_1}}{U(s) \cap X_1}$ is the second component.
  The same argument holds for $X_2$ and the third component.
  This is exactly the state triplet we require,
  so by Lemma~\ref{lem:pcsp_mut},
  we know $(t \pach s) \in M(\pcspsupport{X_1}, \pcspsupport{X_2})$.
  Therefore, by induction,
  all $(t \pach s) \in \tau$ are in $M(\pcspsupport{X_1}, \pcspsupport{X_2})$,
  so $\tau \subseteq M(\pcspsupport{X_1}, \pcspsupport{X_2})$.
\end{proof}

\subsection*{Gradients}

Suppose we have a set of reference distributions $\{p_i\}$ with corresponding taxon sets $\{X_i\}$.
Suppose further that these distributions are parameterized by SBNs.
Our goal is to find an SBN $q$ on taxon set $X = \bigcup_i X_i$
that minimizes the sum of KL-divergences between each $p_i$
and the restriction of $q$ to $X_i$.
If we can calculate the gradient of $\sum_i \KLD{p_i}{\rest{q}{X_i}}$
with respect to the parameters of our SBN,
then we can use gradient descent to minimize our objective
and optimize our supertree distribution.
In this section we describe how to perform such
gradient calculation in the CCD and the SCD parameterizations.

\subsubsection*{CCD parameterizations}

Under CCDs, we parameterize the distribution of subsplits $s$ conditional on their clade $U(s)$
using a \emph{softmax} transformation of a parameter vector $\mathbf{v} = \{v_{s}\}$, i.e.
\[
q(s \mid U(s)) = \frac{\exp({v_{s}})}{\sum_{s':U(s')=U(s)}\exp({v_{s'}})}.
\]
We choose this softmax parameterization in order to have
clean derivatives with respect to our parameters
and facilitate taking the gradient of our objective function.
We use the shorthand
$\drv{s}f(\mathbf{v}) = \frac{\partial}{\partial v_{s}} f(\mathbf{v})$
for the derivative of a function with respect to one of our CCD parameters.

For subsplits $s$ and $s'$,
a standard derivative result for softmax parameters gives us,
\begin{equation} \label{eq:ccd_cond_deriv}
  \drv{s'} q(s \mid U(s)) = q(s' \mid U(s')) \left[ 1_{\{s=s'\}} - q(s \mid U(s')) \right],
\end{equation}
if $U(s) = U(s')$ and zero otherwise.
If we have CCD-parameterized SBNs $p$ and $q$ on tip sets $\bar{X}$ and $X$,
we can use Equation~\ref{eq:KL_CCD} to take the derivative of
the KL divergence between $p$ and $q$ restricted to $\bar{X}$,
\begin{align*}
  \drv{s'} \KLD{p}{\rest{q}{\bar{X}}}
  &= -\sum_{\bar{s}} p(\bar{s}) \: \drv{s'} \log \rest{q}{\bar{X}}(\bar{s}|U(\bar{s})) \\
  &= -\sum_{\bar{s}} \frac{p(\bar{s})}{\rest{q}{\bar{X}}(\bar{s}|U(\bar{s}))} \: \drv{s'} \, \rest{q}{\bar{X}}(\bar{s}|U(\bar{s})). \stepcounter{equation}\tag{\theequation}\label{eq:kld_deriv_ccd}
\end{align*}
We can then use Equations~\ref{eq:CCDUnconditional},
\ref{eq:CCDCladeUnconditional}, and~\ref{eq:CCDConditional}
to break the formula down to solely depend on
the derivative of the unconditional subsplit probability.
For subsplit $\bar{s}$ and clade $\bar{U}$ in $\bar{X}$,
\begin{align*}
  \drv{s'} \, \rest{q}{\bar{X}}(\bar{s}|U(\bar{s}))
  &= \frac{ \rest{q}{\bar{X}}(U(\bar{s})) \cdot \drv{s'} \rest{q}{\bar{X}}(\bar{s})
          - \rest{q}{\bar{X}}(\bar{s}) \cdot \drv{s'} \rest{q}{\bar{X}}(U(\bar{s})) }
          { \rest{q}{\bar{X}}(U(\bar{s}))^{2} } \\
  &= \frac{ \drv{s'} \rest{q}{\bar{X}}(\bar{s})
     - \rest{q}{\bar{X}}(\bar{s}|U(\bar{s})) \cdot
     \drv{s'} \rest{q}{\bar{X}}(U(\bar{s})) }
     { \rest{q}{\bar{X}}(U(\bar{s})) }, \stepcounter{equation}\tag{\theequation}\label{eq:ccd_rest_cond_deriv}\\
  \drv{s'} \rest{q}{\bar{X}}(\bar{s})
  &= \sum_{\rest{s}{\bar{X}} = \bar{s}} \drv{s'} q(s), \\
  \drv{s'} \rest{q}{\bar{X}}(\bar{U})
  &= \sum_{\bar{s}:U(\bar{s})=\bar{U}} \drv{s'} \rest{q}{\bar{X}}(\bar{s})
   = \sum_{\bar{s}:U(\bar{s})=\bar{U}} \sum_{\rest{s}{\bar{X}} = \bar{s}} \drv{s'} q(s).
\end{align*}
Thus all of our derivatives depend on the derivative
of the unconditional subsplit probability.
If we are taking a derivative with respect to $v_{s'}$,
we can then use the law of total probability and SBN conditional independence
to split our unconditional subsplit probabilities
into the collection of paths that pass through $U(s')$ and
the paths that do not.
We use the notation $C_{s'} \coloneqq q(U(s') \notin \{\rootparent{X} \ancdec s\})$
to capture the paths that do not pass through $U(s')$
and will therefore be a constant with respect to $v_{s'}$.
\begin{align*}
  q(s) &= q(U(s') \ancdec s) + q(U(s') \notin \{\rootparent{X} \ancdec s\}) \\
  &= \sum_{s'':U(s') \pach s''} q(U(s') \pach s'' \ancdec s) + C_{s'} \\
  &= q(U(s')) \sum_{s'':U(s') \pach s''} q(s'' | U(s')) \, q(s'' \ancdec s | s'') + C_{s'}.
\end{align*}

Finally, we can use Equation \ref{eq:ccd_cond_deriv}
to express the unconditional probability derivative in terms of
conditional and unconditional probabilities,
readily available from the SBN itself.
\begin{align*}
  \drv{s'} q(s)
  &= q(U(s')) \sum_{s'':U(s') \pach s''} \drv{s'|U(s')} \, q(s'' | U(s')) \, q(s'' \ancdec s \mid s'') \\
  &= q(U(s')) \sum_{s'':U(s') \pach s''} q(s'|U(s')) \left[ 1_{\{s''=s'\}} - q(s''|U(s')) \right]  q(s'' \ancdec s \mid s'') \\
  &= q(U(s')) q(s'|U(s')) \left[q(s' \ancdec s \mid s') - \sum_{s'':U(s') \pach s''} q(s''|U(s')) q(s'' \ancdec s \mid s'') \right] \\
  &= q(U(s')) q(s'|U(s')) \left[q(s' \ancdec s \mid s') - q(U(s') \ancdec s \mid U(s')). \right] \stepcounter{equation}\tag{\theequation}\label{eq:deriv_subsplit_CCD}
\end{align*}
Combining Equations \ref{eq:kld_deriv_ccd}, \ref{eq:ccd_rest_cond_deriv}, and \ref{eq:CCDCladeUnconditional} gives us a relatively succinct formula for the derivative,
\begin{align*}
  \drv{s'} \KLD{p}{\rest{q}{\bar{X}}}
  &= -\sum_{\bar{s}} \frac{p(\bar{s})}{\rest{q}{\bar{X}}(\bar{s}|U(\bar{s}))} \: \drv{s'} \, \rest{q}{\bar{X}}(\bar{s}|U(\bar{s})) \\
  &= -\sum_{\bar{s}} \frac{ p(\bar{s}) } { \rest{q}{\bar{X}}(\bar{s}) }
  \left[ \drv{s'} \rest{q}{\bar{X}}(\bar{s}) - \rest{q}{\bar{X}}(\bar{s}|U(\bar{s})) \cdot \drv{s'} \rest{q}{\bar{X}}(U(\bar{s})) \right] \\
  &= -\sum_{\bar{s}} \frac{ p(\bar{s}) } { \rest{q}{\bar{X}}(\bar{s}) } \drv{s'} \rest{q}{\bar{X}}(\bar{s}) +
  \sum_{\bar{s}} \frac{ \drv{s'} \rest{q}{\bar{X}}(U(\bar{s})) } { \rest{q}{\bar{X}}(U(\bar{s})) } p(\bar{s}) \\
  &= \sum_{\bar{U}} \frac{ \drv{s'} \rest{q}{\bar{X}}(\bar{U}) } { \rest{q}{\bar{X}}(\bar{U}) } \sum_{\bar{s}:U(\bar{s})=\bar{U}} p(\bar{s})
   - \sum_{\bar{s}} \frac{ p(\bar{s}) } { \rest{q}{\bar{X}}(\bar{s}) } \drv{s'} \rest{q}{\bar{X}}(\bar{s}) \\
  &= \sum_{\bar{U}} \frac{ p(\bar{U}) } { \rest{q}{\bar{X}}(\bar{U}) } \drv{s'} \rest{q}{\bar{X}}(\bar{U})
   - \sum_{\bar{s}} \frac{ p(\bar{s}) } { \rest{q}{\bar{X}}(\bar{s}) } \drv{s'} \rest{q}{\bar{X}}(\bar{s}).
\end{align*}
This form displays one of the derivative's natural symmetries between clade and subsplit probabilities and derivatives.
However, if implemented naively, this form may result in iterating over the subsplit support multiple times.
We address this by exchanging summations,
\begin{align*}
  \drv{s'} \KLD{p}{\rest{q}{\bar{X}}}
  &= \sum_{\bar{U}} \frac{ p(\bar{U}) } { \rest{q}{\bar{X}}(\bar{U}) } \sum_s \drv{s'} q(s) 1_{\{U(\rest{s}{\bar{X}}) = \bar{U}\}} \\
   & \quad - \sum_{\bar{s}} \frac{ p(\bar{s}) } { \rest{q}{\bar{X}}(\bar{s}) } \sum_s \drv{s'} q(s) 1_{\{ \rest{s}{\bar{X}} = \bar{s} \}} \\
  &= \sum_s \left[ \frac{ p(U(\rest{s}{\bar{X}})) } { \rest{q}{\bar{X}}(U(\rest{s}{\bar{X}})) }
   - \frac{ p(\rest{s}{\bar{X}}) } { \rest{q}{\bar{X}}(\rest{s}{\bar{X}}) } \right] \drv{s'} q(s).
\end{align*}
This form clearly shows the algorithmic complexity of the gradient computation
as $O(\nss^2)$ where $\nss$ is the number of subsplits in the support,
since both the summation and the derivative traverse every subsplit.

\subsubsection*{SCD parameterizations}

Under SCDs, we parameterize the distribution of child subsplits $s$
conditional on their parent subsplit and clade $\sscl{t}{U(s)}$
with parameter vector $\mathbf{v} = \{v_{s|t}\}$, i.e.
\[
q(s|\sscl{t}{U(s)}) = \frac{\exp({v_{s|t}})}{\sum_{s':\sscl{t}{U(s)}\pach s'}\exp({v_{s'|t}})}.
\]
We use the shorthand
$\drv{s|t}f(\mathbf{v}) = \frac{\partial}{\partial v_{s|t}} f(\mathbf{v})$
for the derivative of a function with respect to one of our SCD parameters.
For subsplits $s$, $s'$, $t$, and $t'$,
our softmax derivative result is then,
\begin{equation*}
  \drv{s'|t'} q(s|t) = q(s'|\sscl{t'}{U(s)}) \left[ 1_{\{s=s'\}} - q(s|\sscl{t'}{U(s)}) \right],
  \stepcounter{equation}\tag{\theequation}\label{eq:scd_cond_deriv}
\end{equation*}
if $t = t'$ and zero otherwise.
The derivative of the KL divergence is,
\begin{align*}
  \drv{s'|t'} \KLD{p}{\rest{q}{\bar{X}}}
  &= -\sum_{(\bar{t} \pach \bar{s})} p(\bar{t} \pach \bar{s}) \: \drv{s'|t'} \log \rest{q}{\bar{X}}(\bar{s}|\bar{t}) \\
  &= -\sum_{(\bar{t} \pach \bar{s})} \frac{p(\bar{t} \pach \bar{s})}{\rest{q}{\bar{X}}(\bar{s}|\bar{t})} \: \drv{s'|t'} \, \rest{q}{\bar{X}}(\bar{s}|\bar{t}).
  \stepcounter{equation}\tag{\theequation}\label{eq:kld_deriv_scd}
\end{align*}
For PCSP $(\bar{t} \pach \bar{s})$ we see,
\begin{align*}
  \drv{s'|t'} \, \rest{q}{\bar{X}}(\bar{s}|\bar{t})
  &= \frac{ \rest{q}{\bar{X}}(\bar{t}) \cdot \drv{s'|t'} \rest{q}{\bar{X}}(\bar{t} \pach \bar{s})
          - \rest{q}{\bar{X}}(\bar{t} \pach \bar{s}) \cdot \drv{s'|t'} \rest{q}{\bar{X}}(\bar{t}) }
          { \rest{q}{\bar{X}}(\bar{t})^{2} } \\
  &= \frac{ \drv{s'|t'} \rest{q}{\bar{X}}(\bar{t} \pach \bar{s})
     - \rest{q}{\bar{X}}(\bar{s}|\bar{t}) \cdot \drv{s'|t'} \rest{q}{\bar{X}}(\bar{t}) }
     { \rest{q}{\bar{X}}(\bar{t}) },
     \stepcounter{equation}\tag{\theequation}\label{eq:scd_rest_cond_deriv}\\
  \drv{s'|t'} \rest{q}{\bar{X}}(\bar{t})
  &= \sum_{\rest{a}{\bar{X}} = \bar{t}} \drv{s'|t'} q(a), \\
  \drv{s'|t'} \rest{q}{\bar{X}}(\bar{t} \pach \bar{s})
  &= \sum_{\rest{a}{\bar{X}} = \bar{t}} \sum_{\rest{d}{\bar{X}} = \bar{s}} \drv{s'|t'} q(a \ancdec d).
\end{align*}
One building block we need is
\begin{align*}
  \derv{q}{s'|t'}{a}
  &\coloneqq \drv{s'|t'} q(\sscl{t'}{U(s')} \ancdec a \mid \sscl{t'}{U(s')}) \\
  &= \sum_{s'':\sscl{t'}{U(s')} \pach s''} \drv{s'|t'} q(s'' | \sscl{t'}{U(s')}) q(s'' \ancdec a \mid s'') \\
  &= \sum_{s'':\sscl{t'}{U(s')} \pach s''} q(s'|\sscl{t'}{U(s')}) \left[ 1_{\{s''=s'\}} - q(s''|\sscl{t'}{U(s')}) \right]  q(s'' \ancdec a \mid s'') \\
  &= q(s'|\sscl{t'}{U(s')}) \left[q(s' \ancdec a \mid s') - \sum_{s'':\sscl{t'}{U(s')} \pach s''} q(s''|\sscl{t'}{U(s')}) q(s'' \ancdec a \mid s'') \right] \\
  &= q(s'|\sscl{t'}{U(s')}) \left[q(s' \ancdec a \mid s') - q(\sscl{t'}{U(s')} \ancdec a \mid \sscl{t'}{U(s')}) \right].
\end{align*}
Note that computing $\derv{q}{s'|t'}{a}$ is a constant time calculation
after accumulating a table of path probabilities in linear time before the gradient calculation.

Following an argument similar to Equation~\ref{eq:deriv_subsplit_CCD},
we drop terms that are constant with respect to $s'|t'$ and see that,
\begin{align*}
  \drv{s'|t'} q(a)
  &= q(t') \drv{s'|t'} q(\sscl{t'}{U(s')} \ancdec a \mid \sscl{t'}{U(s')}) \\
  &= q(t') \derv{q}{s'|t'}{a}. \stepcounter{equation}\tag{\theequation}\label{eq:deriv_subsplit_SCD}
\end{align*}
Furthermore, by identical reasoning we calculate the derivative of the path probabilities,
\begin{align*}
  \drv{s'|t'} q(a \ancdec d)
  &= \drv{s'|t'} \left[ q(a) q(a \ancdec d | a) \right] \\
  &= q(a) \drv{s'|t'}  q(a \ancdec d | a) + \drv{s'|t'} q(a) q(a \ancdec d | a) \\
  &= q(a) q(a \ancdec t' | a) \drv{s'|t'} q(\sscl{t'}{U(s')} \ancdec d \mid \sscl{t'}{U(s')}) \\
  & \qquad + q(t') \derv{q}{s'|t'}{a} q(a \ancdec d | a) \\
  &= q(a) q(a \ancdec t' | a) \derv{q}{s'|t'}{d} \\
  & \qquad + q(t') \derv{q}{s'|t'}{a} q(a \ancdec d | a).
   \stepcounter{equation}\tag{\theequation}\label{eq:deriv_path_SCD}
\end{align*}

We combine Equations~\ref{eq:kld_deriv_scd} and~\ref{eq:scd_rest_cond_deriv} to find our KL derivative,
\begin{align*}
  \drv{s'|t'} \KLD{p}{\rest{q}{\bar{X}}}
  &= -\sum_{(\bar{t} \pach \bar{s})}
  \frac{p(\bar{t} \pach \bar{s})}{\rest{q}{\bar{X}}(\bar{s}|\bar{t})} \:
  \drv{s'|t'} \, \rest{q}{\bar{X}}(\bar{s}|\bar{t}), \\
  &= -\sum_{(\bar{t} \pach \bar{s})}
  \frac{p(\bar{t} \pach \bar{s})}{\rest{q}{\bar{X}}(\bar{s}|\bar{t})} \:
  \frac{1}{\rest{q}{\bar{X}}(\bar{t})}
  \left[ \drv{s'|t'} \rest{q}{\bar{X}}(\bar{t} \pach \bar{s})
     - \rest{q}{\bar{X}}(\bar{s}|\bar{t}) \cdot \drv{s'|t'} \rest{q}{\bar{X}}(\bar{t}) \right] \\
  &= \sum_{(\bar{t} \pach \bar{s})}
  \frac{p(\bar{t})}{\rest{q}{\bar{X}}(\bar{t})}
  \frac{p(\bar{s}|\bar{t})}{\rest{q}{\bar{X}}(\bar{s}|\bar{t})}
  \left[ \rest{q}{\bar{X}}(\bar{s}|\bar{t}) \cdot \drv{s'|t'} \rest{q}{\bar{X}}(\bar{t})
  - \drv{s'|t'} \rest{q}{\bar{X}}(\bar{t} \pach \bar{s}) \right] \\
  &= \sum_{\bar{t}}
  \frac{p(\bar{t})}{\rest{q}{\bar{X}}(\bar{t})}
  \drv{s'|t'} \rest{q}{\bar{X}}(\bar{t})
  \sum_{\bar{s}:\bar{t} \pach \bar{s}} p(\bar{s}|\bar{t}) \\
  & \quad -\sum_{(\bar{t} \pach \bar{s})}
  \frac{p(\bar{t})}{\rest{q}{\bar{X}}(\bar{t})}
  \frac{p(\bar{s}|\bar{t})}{\rest{q}{\bar{X}}(\bar{s}|\bar{t})}
  \drv{s'|t'} \rest{q}{\bar{X}}(\bar{t} \pach \bar{s}) \\
  &= \sum_{\bar{t}} \children{\bar{t}}
  \frac{p(\bar{t})}{\rest{q}{\bar{X}}(\bar{t})}
  \drv{s'|t'} \rest{q}{\bar{X}}(\bar{t}) \\
  & \quad -\sum_{(\bar{t} \pach \bar{s})}
  \frac{p(\bar{t})}{\rest{q}{\bar{X}}(\bar{t})}
  \frac{p(\bar{s}|\bar{t})}{\rest{q}{\bar{X}}(\bar{s}|\bar{t})}
  \drv{s'|t'} \rest{q}{\bar{X}}(\bar{t} \pach \bar{s}) \\
  &= \sum_{\bar{t}} \children{\bar{t}}
  \frac{p(\bar{t})}{\rest{q}{\bar{X}}(\bar{t})}
  \sum_{\rest{a}{\bar{X}} = \bar{t}} \drv{s'|t'} q(a) \\
  & \quad -\sum_{(\bar{t} \pach \bar{s})}
  \frac{p(\bar{t})}{\rest{q}{\bar{X}}(\bar{t})}
  \frac{p(\bar{s}|\bar{t})}{\rest{q}{\bar{X}}(\bar{s}|\bar{t})}
  \sum_{\rest{a}{\bar{X}} = \bar{t}} \sum_{\rest{d}{\bar{X}} = \bar{s}}
  \drv{s'|t'} q(a \ancdec d),
\end{align*}
where $\children{\bar{t}}$ is the number of child clades of $\bar{t}$ of size 2 or larger,
and therefore have a probability distribution of child subsplits to sum over.
After the linear pass through the support accumulating path probabilities,
the algorithmic efficiency of this calculation is
$O(\npcsp \cdot \npcsprest{\bar{X}})$,
where $\npcsp$ is the number of PCSPs in the support,
and $\npcsprest{\bar{X}}$ is the number of paths in the support
that restrict to a PCSP on tip set $\bar{X}$.


\section*{Results}


\subsection*{Simulated Data}

We begin exploring the effectiveness of \texttt{vbsupertree}
through a simulation study.
We sample a phylogenetic tree with 40 tips from the
classical isochronous, constant effective population size coalescent,
and simulate a sequence data alignment using the Jukes-Cantor 1969 model
\citep{jukes1969evolution}.
We choose a tip to remove from the alignment for our first reference dataset,
and repeat the process on a different tip for our second reference.
We approximate the true posterior and our two references
by running Markov chain Monte Carlo \citep{hastings1970monte}
using the phylogenetic software BEAST \citep{drummond2007beast,Suchard2018-eo}
on our three sequence datasets.
We run BEAST for $10^7$ steps,
remove 50\% burn-in from the beginning,
and subsample every 1000th tree to reduce autocorrelation,
resulting in a 5000 tree posterior samples.
Since we are not guaranteed to see every credible tree topology in every run due to the size of tree space,
we trim out all tree topologies that only appear in a given BEAST output once,
in order to increase the proportion of PCSPs in common
(under the appropriate restriction) between the three posterior samples.
We train rooted, SCD-parameterized SBNs
to use as our ground truth distribution and two reference distributions.
Finally, we trim our SBNs in order to make the supports compatible for KL-divergence calculation.
We trim any PCSPs in our ground truth and our references
that are not covered by the appropriate restriction of our mutualized support
(no restriction in the case of the ground truth).

Applying \texttt{vbsupertree} to the two generated reference distribution
leads to quick convergence of the loss function to a small value,
as seen in Figure~\ref{fig:vbsupertree_sim}, left panel.
Additionally, knowing the true posterior, we chart
the progression of the KL-divergence of our supertree SBN versus the truth,
resulting in the right panel.

\begin{figure}[htbp]
  \centering
  \includegraphics[width=0.95\textwidth]{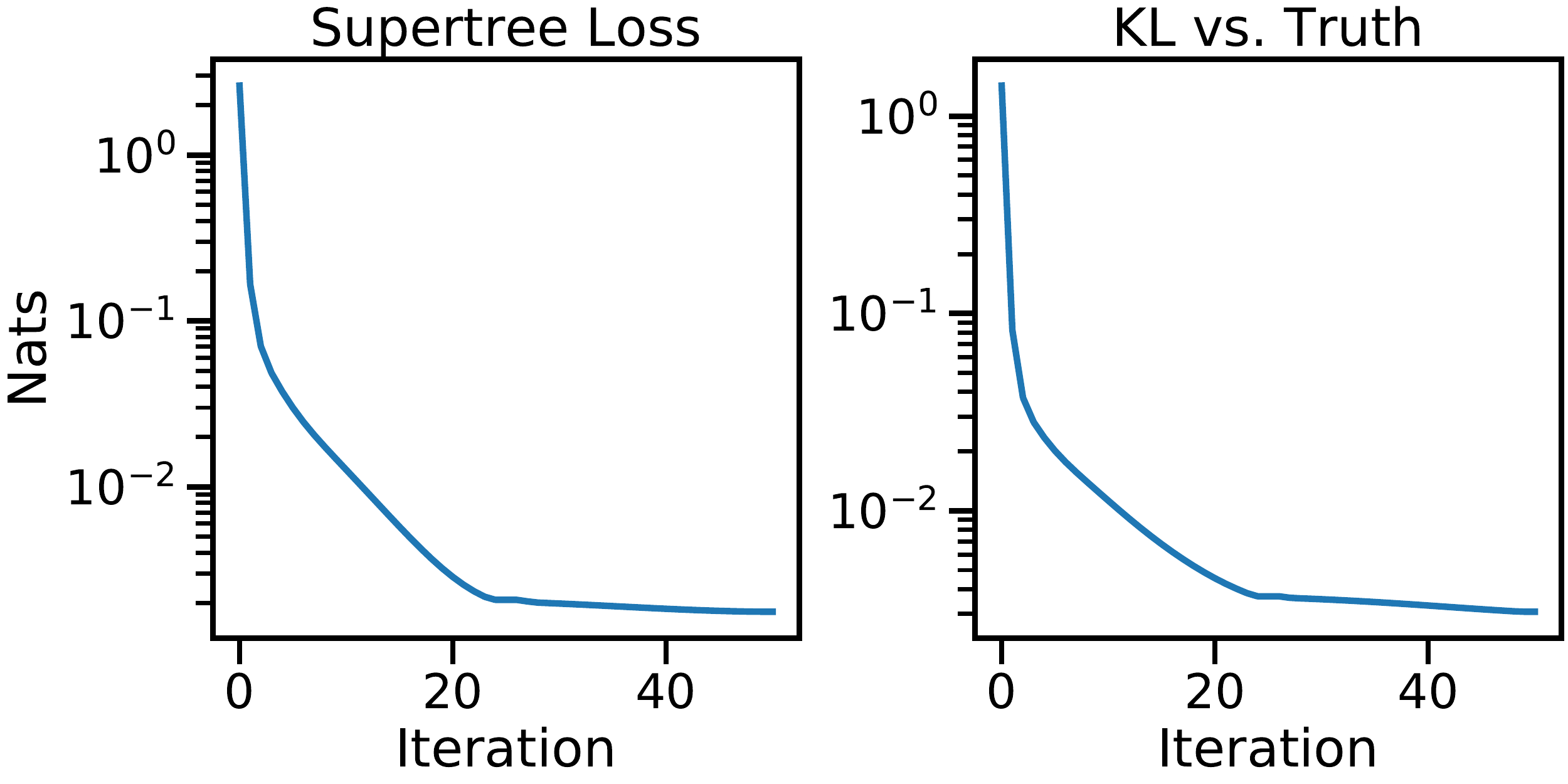}
  \caption{The left panel shows the progression of the loss function over 50 iterations of \texttt{vbsupertree}.
  The right panel shows the progression of the KL-divergence versus the truth.}
  \label{fig:vbsupertree_sim}
\end{figure}

\subsection*{Real World Data}

For an analysis on real world data,
we select 30 well-differentiated hepatitis C virus (HCV) sequences
from the alignment previously analyzed by \citet{pybus2003epidemiology} and others.
We choose a tip to remove from the alignment for our first reference dataset,
and repeat the process on a different tip for our second reference.
From this stage forward, our approach is identical to our simulation study.
We approximate the true posterior and our two references
by running BEAST on our three sequence datasets.
We run BEAST for $10^9$ steps,
remove 50\% burn-in from the beginning,
and subsample every 5000th tree to reduce autocorrelation,
resulting in a 100000 tree posterior samples.
We trim out all tree topologies that only appear in a given BEAST output once,
We train rooted, SCD-parameterized SBNs
to use as our ground truth distribution and two reference distributions.
Finally, we trim our SBNs in order to make the supports compatible for KL-divergence calculation.
We trim any PCSPs in our ground truth and our references
that are not covered by the appropriate restriction of our mutualized support.

Applying \texttt{vbsupertree} to the two generated reference distribution
leads to quick convergence of the loss function to a small value,
as seen in Figure~\ref{fig:vbsupertree_hcv}, left panel.
Additionally, knowing the true posterior, we chart
the progression of the KL-divergence of our supertree SBN versus the truth,
resulting in the right panel.

\begin{figure}[htbp]
  \centering
  \includegraphics[width=0.95\textwidth]{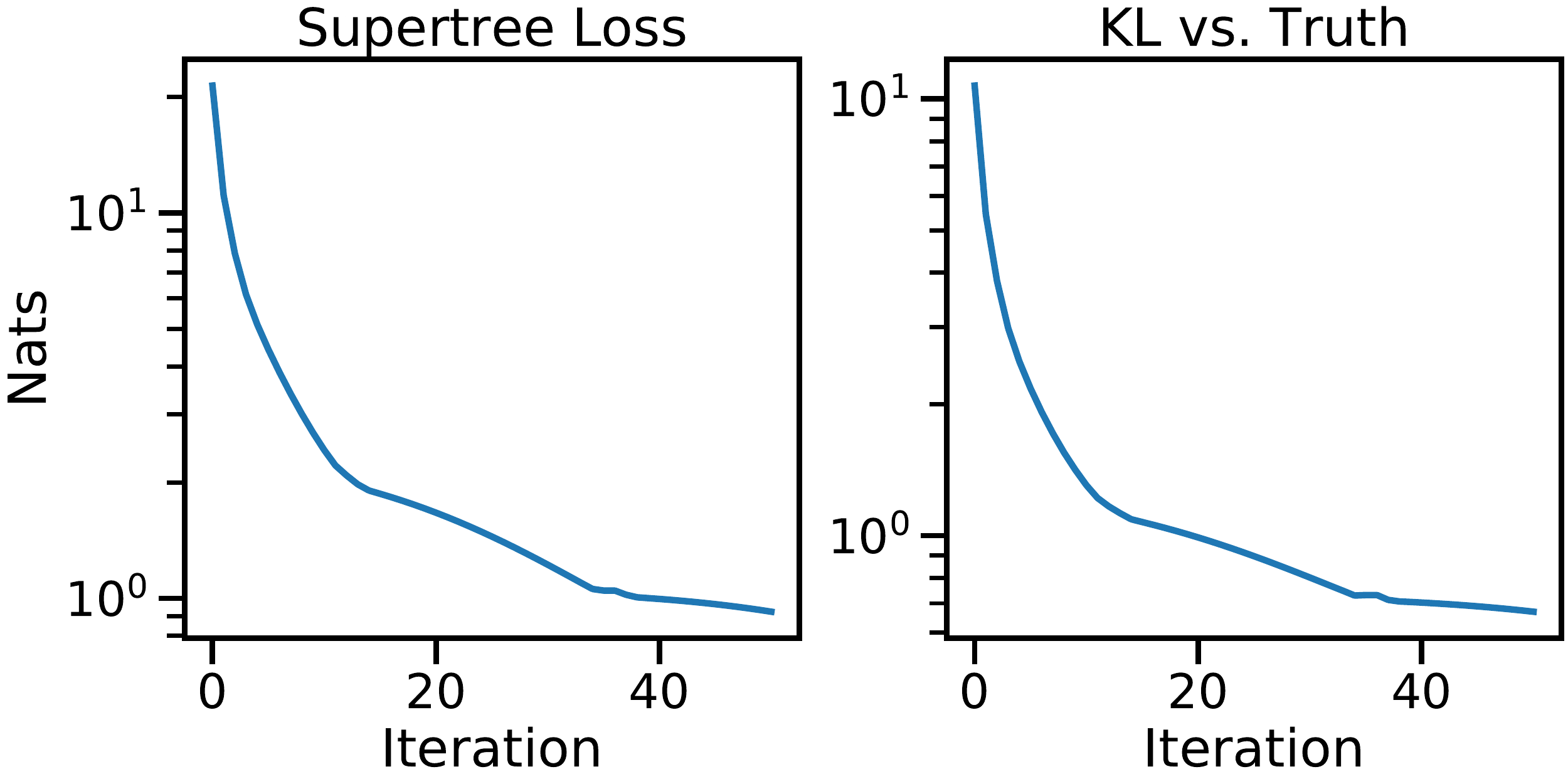}
  \caption{The left panel shows the progression of the loss function over 50 iterations of \texttt{vbsupertree}.
  The right panel shows the progression of the KL-divergence versus the truth.}
  \label{fig:vbsupertree_hcv}
\end{figure}

\section*{Discussion}

In this paper, we lay out an SBN-based framework for
generating supertree supports and training variational supertree distributions.
We apply our method to simulated sequence data and find that
it trains an SBN that very closely approximates our target posterior distribution.
We also apply our method to a subset of a well-known HCV dataset,
and successfully train it to approximate our ground truth distribution.

Although the work of \citet{Ronquist2004-rr} described in the introduction is the closest work to that presented here, two other lines of research deserve mention in this context.
First, \citet{De_Oliveira_Martins2016-mi} derive a Bayesian extension of previous work on maximum\hyp{}likelihood supertrees \citep{Steel2008-pn}.
In this strategy, one posits a likelihood model based on measures of disagreement between trees, such as an exponential likelihood in terms of some distance between tree topologies.
This method is interesting in that it can incorporate a number of distances representing various aspects of tree disagreement \citep{De_Oliveira_Martins2016-mi}, however, this is a different than the direct goal of reconstructing a posterior distribution on taxon set $X$ given its projections onto subsets as we describe below.
Our objective directly phrases a goal appropriate for divide-and-conquer Bayesian phylogenetics.

Another related line of research concerns sequential Monte Carlo inference by subtree merging \citep{Bouchard-Ct2012-zu,Wang2015-nm}.
The state of such a sampler is described by a population of ``particles,'' each of which consists of a collection of rooted trees on disjoint taxon subsets such that the union of the tree tips is the entire taxon set.
In each step of the algorithm, particles are chosen from the previous generation, and for each particle a pair of subtrees are merged.
These probabilistic choices and mergings are designed carefully such that after completion of all of the steps one obtains a sample from the phylogenetic posterior distribution.
This method is in a sense a type of divide-and-conquer algorithm in that it finds solutions to phylogenetic problems on subsets of taxa before finding the entire posterior.
However, it differs significantly from our current goal in that we assume that the taxon subsets and the posterior distributions on subtrees are delivered as part of the problem statement, whereas phylogenetic SMC ingests raw molecular sequence data.

One common obstacle for supertree methods is the fact that
the compatibility of $k$ tree topologies on $k$ tip sets
cannot be checked in polynomial time in $k$ \citep{Steel1992-by}.
For the methods we present, holding ourselves to two tip sets at a time,
this is not an issue.
Our subsplit- and PCSP-based approaches pool our topologies
into two sets which effectively sets $k=2$.
It is for this reason we propose using a one-at-a-time approach
for using our supertree support mutualization methods on $k>2$ tip sets.
We anticipate exploring the properties of one-at-a-time versus all-at-once
mutualization in future work.

One caveat for our supertree support mutualization methods
arises when the reference supports do not completely cover
the true restricted supports.
When the references cover the truth,
our results guarantee that the mutualized support contains
every topology that we require without containing any extraneous elements.
However, if the reference supports are missing elements from the true supports,
then topologies will go missing from the mutualized supertree support
and it is not guaranteed to cover the true support.
Unfortunately, most tree-based Bayesian analyses will have
enormous posterior topology supports,
and Monte Carlo based methods will collect only a sample from the larger posterior.
Thus in future work, we will loosen our inclusion criteria for our supertree support methods while still attempting to keep the mutual support as small as possible.

In general we view this work as providing a foundation for divide-and-conquer Bayesian phylogenetics.
To make this a complete method, we will also require methods to merge variational branch length distributions~\citep{vbpi}.
Further refinement of these merged distributions with the complete data set, in terms of both support and continuous parameters, will likely be required.

\section*{Acknowledgements}
The authors thank Alexei Drummond for emphasizing the importance of this problem, and Mike Steel for a discussion of problem complexity.

We also thank the larger community of researchers engaged in variational Bayes phylogenetic inference, including
Mathieu Fourment,
Xiang Ji,
Seong-Hwan Jun,
Ognian Milanov,
Hassan Nasif,
Christiaan Swanepoel,
and
Marc Suchard.

This work supported by NSF grants CISE-1561334 and CISE-1564137 as well as NIH U54 grant GM111274.
The research of Frederick Matsen was supported in part by a Faculty Scholar grant from the Howard Hughes Medical Institute and the Simons Foundation.

\bibliographystyle{plainnat}
\bibliography{main}

\appendix

\section*{Appendix}

\subsection*{Mutual PCSP support example}

We will now illustrate Algorithm~\ref{alg:mutual_pcsp} with a simple yet nontrivial example.
To illustrate specific clades, subsplits, and PCSPs, we introduce some shorthand notations for this section.
Tree tips will be represented by capital letters, such as $A$ and $B$.
Clades will be represented by concatenated tips, such as $ABD$ and $ACD$.
Subsplits will be represented by two clades separated by a colon, such as $AB:CD$.
Subsplits focusing on a specific child clade will be represented by two clades separated by a slash, focusing on the latter clade, such as $CD/AB$.
PCSPs will then look like $CD/AB \pach A:B$.

\begin{figure}[htbp]
  \centering
  \includegraphics[width=0.9\textwidth]{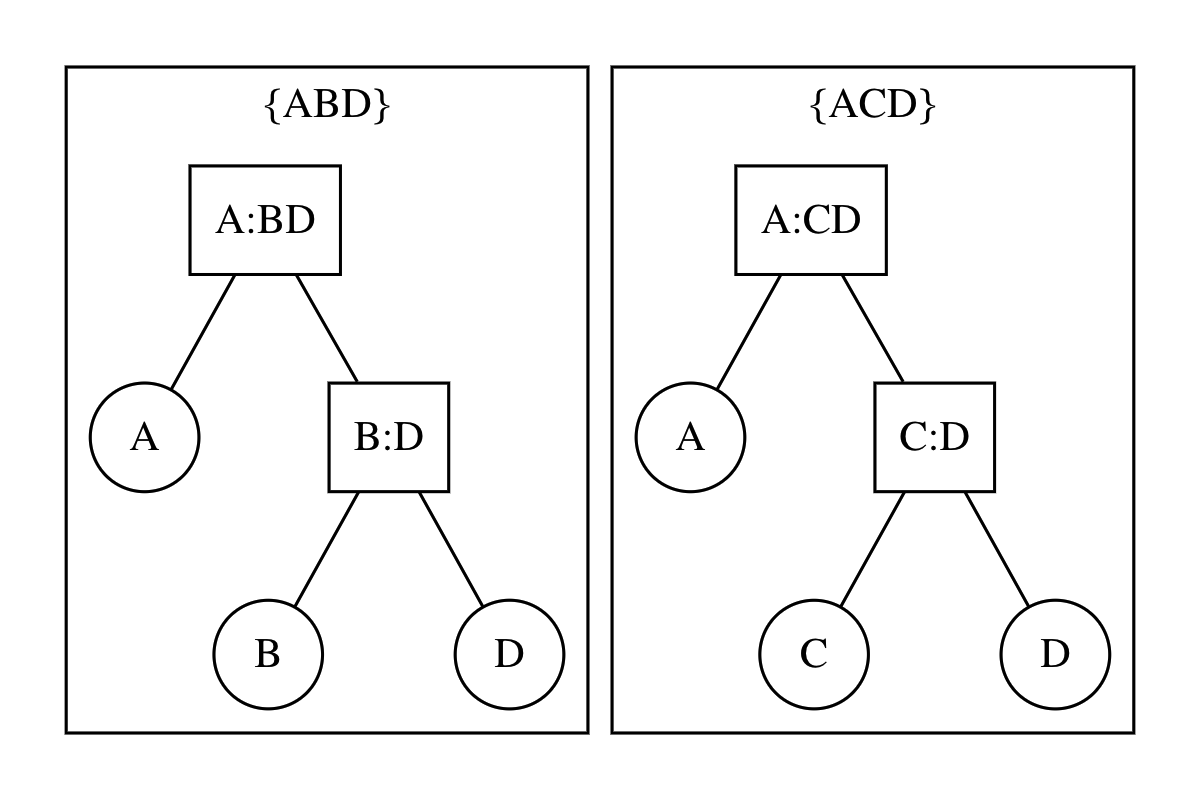}
  \caption{Example starting trees for finding the mutual PCSP support.
  }%
  \label{fig:mutual_pcsp_trees}
\end{figure}

Suppose we are given the trees depicted in Figure~\ref{fig:mutual_pcsp_trees}.
The tree with tips $ABD$ results in the reference PCSP support
\begin{align*}
  \pcspsupport{ABD} = \{/ABD &\pach A:BD,\\
    A/BD &\pach B:D\},
\end{align*}
and the tree with tips $ACD$ results in
\begin{align*}
  \pcspsupport{ACD} = \{/ACD &\pach A:CD,\\
    A/CD &\pach C:D\},
\end{align*}
omitting singletons and the empty subsplit for brevity.

The union of the tip sets is $ABCD$.
The mutual PCSP support algorithm begins with stack $[(/ABCD, /ABD, /ACD)]$.

We consider the children (including the trivial child) of $/ABD$ which are
$\{A:BD, \colon ABD\}$ and the children of $/ACD$, namely $\{A:CD, \colon ACD\}$.
The only nontrivial subsplit this results in via $\subspcross$ is $A:BCD$.

We add $/ABCD \pach A:BCD$ to the output.

We see $A:BCD$ restricts to $A:BD$ and $A:CD$,
both of which are subsplits in their respective reference supports.
The only child clade of $A:BCD$ that is size 2 or larger is $BCD$,
so we push $(A/BCD, A/BD, A/CD)$ to the stack.

We pop $(A/BCD, A/BD, A/CD)$.

We combine the subsplits $\{B:D, \colon BD\}$ and $\{C:D, \colon CD\}$ via $\subspcross$ as before,
resulting in a set of potential children: $\{BC:D, B:CD, BD:C\}$.

We add $A/BCD \pach BC:D$ to the output.

We see $BC:D$ restricts to $B:D$ and $C:D$,
so we push $(D/BC, D/B, D/C)$ to the stack.

We add $A/BCD \pach B:CD$ to the output.

We see $BC:D$ restricts to $B:D$ and $\colon CD$.
However, $\colon CD$ is not in $\pcspsupport{ACD}$,
triggering the ``else'' clause,
so we push $(B/CD, B/D, A/CD)$ to the stack.

We add $A/BCD \pach BD:C$ to the output.

We see $BD:C$ restricts to $:BD$ (not in $\pcspsupport{ABD}$) and $C:D$,
so we push $(C/BD, A/BD, C/D)$ to the stack.

We pop $(C/BD, A/BD, C/D)$.

We combine the subsplits $\{B:D, \colon BD\}$ and $\{\colon D\}$,
resulting in $B:D$.

We add $C/BD \pach B:D$ to the output.
No child clades are size 2 or larger, so we do not push anything to the stack.

We pop $(B/CD, B/D, A/CD)$.

We combine the subsplits $\{\colon D\}$ and $\{C:D, \colon CD\}$,
resulting in $C:D$.

We add $B/CD \pach C:D$ to the output.

We pop $(D/BC, D/B, D/C)$.

We combine the subsplits $\{\colon B\}$ and $\{\colon C\}$,
resulting in $B:C$.

We add $D/BC \pach B:C$ to the output.

The stack is empty, so the algorithm terminates here.
The final output is the PCSP support
\begin{align*}
 \{&/ABCD \pach A:BCD, \\
   &A/BCD \pach BC:D, \\
   &A/BCD \pach B:CD, \\
   &A/BCD \pach BD:C, \\
   &C/BD \pach B:D, \\
   &B/CD \pach C:D, \\
   &D/BC \pach B:C\}.
\end{align*}

\begin{figure}[htbp]
  \centering
  \includegraphics[width=0.9\textwidth]{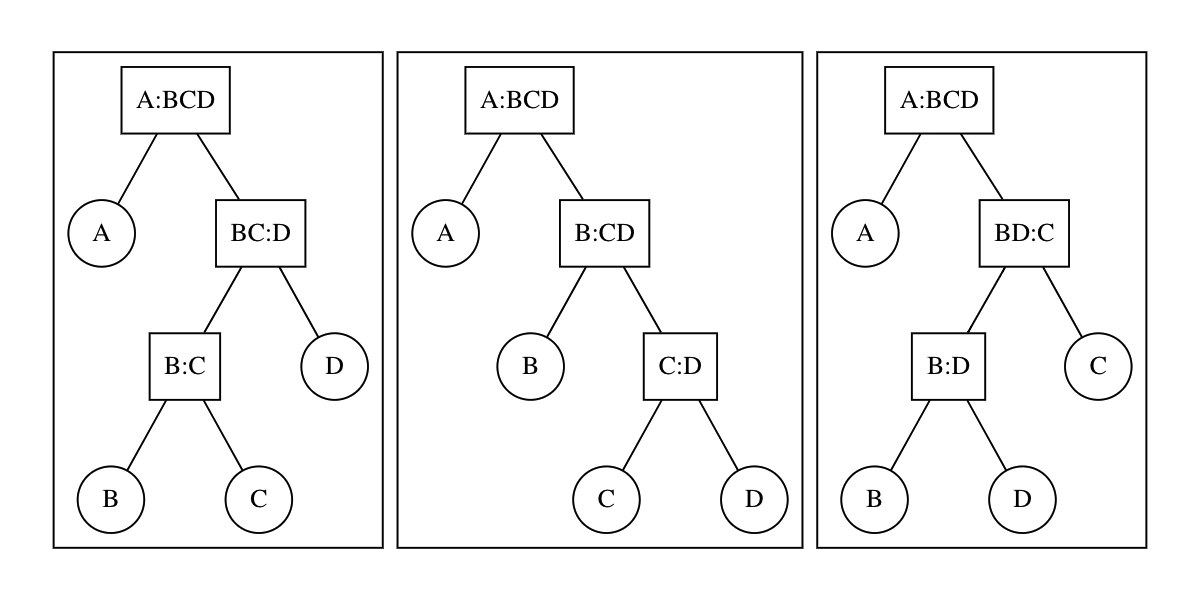}
  \caption{Trees resulting from the mutual PCSP algorithm example applied to the trees from Figure~\ref{fig:mutual_pcsp_trees}.
  }%
  \label{fig:mutual_pcsp_results}
\end{figure}

This PCSP support results in the trees depicted in Figure~\ref{fig:mutual_pcsp_results}.
Note that each tree restricts to the appropriate reference trees
in Figure~\ref{fig:mutual_pcsp_trees}, as required.

\end{document}